\documentclass[12pt, onecolumn, draftclsnofoot]{IEEEtran}
\usepackage{amsthm,amsmath,amsfonts}
\usepackage{algorithmic}
\usepackage{algorithm}
\usepackage{array}
\usepackage[caption=false,font=normalsize,labelfont=sf,textfont=sf]{subfig}
\usepackage{textcomp}
\usepackage{stfloats}
\usepackage{url}
\usepackage{verbatim}
\usepackage{graphicx}
\usepackage{multirow}
\usepackage{cite}
\usepackage{color}
\usepackage{romannum}
\usepackage{setspace}
\usepackage{lineno}
\newtheorem{theorem}{Theorem}

\newtheorem{lemma}{Lemma}
\newtheorem{corollary}{Corollary}

\begin{document}

\title{Robust Prediction of Internal Wave-Affected Multi-Scale Sound Speed Distribution Using Lightweight Kolmogorov-Arnold Networks with Hybrid Basis Functions}

\author{Wei Huang~\IEEEmembership{Member,~IEEE,} Junpeng Lu, Tianhe Xu, Hao Zhang~\IEEEmembership{Senior Member,~IEEE,} Feng Yin*~\IEEEmembership{Senior Member,~IEEE} 
\thanks{Wei Huang, Junpeng Lu and Hao Zhang are with the Faculty of Information Science and Engineering, Ocean University of China, Qingdao, Shandong 266404, China (email:  hw@ouc.edu.cn; junpenglu@stu.ouc.edu.cn; zhanghao@ouc.edu.cn.}
\thanks{Tianhe Xu is with the School of Space Science and Technology, Shandong University at Weihai, Weihai, Shandong 264200, China (thxu@sdu.edu.cn).}
\thanks{Feng Yin is with the School of Artificial Intelligence, The Chinese University of Hong Kong (Shenzhen), Shenzhen, Guangdong 518172, China (yinfeng@cuhk.edu.cn).}
\thanks{This work was supported in part by National Key Research and Development Program of China under Grant 2024YFB3909701, in part by the National Natural Science Foundation of China under Grant 42404001, 62271459, and in part by the Fundamental Research Funds for the Central Universities, Ocean University of China under Grant 202562022.}
\thanks{Corresponding author: Feng Yin (yinfeng@cuhk.edu.cn).}
}

\markboth{IEEE,~Vol.~XX, No.~XX, XX~2026}%
{Huang \MakeLowercase{\textit{et al.}}: LHBF-KANs for Multi-Scale Underwater Sound Speed Prediction}
\maketitle

\begin{abstract}
The underwater sound speed distribution directly governs acoustic propagation paths, rendering it critically important for underwater acoustic communication and target localization. Conventional sound speed profile (SSP) prediction methods provide a good way to estimate the underwater sound speed distribution without on-site data measurement, thus breaking through the coverage area constraints of sonar observation equipment and making the model universal in most marine areas. However, underwater sound speed exhibits multi-scale variations, such as diurnal, quarterly, and intermittent fluctuations caused by ocean processes such as internal waves. This makes it difficult for the fixed structure models in existing methods to have good generalization ability for multi-scale sound speed distribution patterns. To tackle this problem, we proposed a lightweight hybrid basis-function empowered Kolmogorov-Arnold network (LHBF-KAN) model for multi-scale sound speed prediction. We aim to construct a multi-branch representation layer in which different basis functions respond to distinct temporal patterns, from slowly varying background trends to rapid fluctuations induced by dynamic ocean processes, allowing the model to naturally accommodate the inherently multi-scale evolution of sound speed at different depths. To prevent the multi-branch structure from increasing model size, a pruning strategy is further introduced to suppress branches with consistently low contribution during training, yielding a compact architecture, suitable for deployment on resource constrained underwater platforms. 
\end{abstract}

\begin{IEEEkeywords}
Underwater sound speed prediction, Sound speed profiles (SSPs), hybrid basis function, Kolmogorov–Arnold network (KAN).
\end{IEEEkeywords}

\section{Introduction}
\IEEEPARstart{S}{ound} speed varies non-uniformly with temperature, salinity, and pressure, and this distribution characteristic directly affects underwater acoustic signal propagation including ray trajectory, traveling time, and transmission loss \cite{Huang2024SurveySSP}. As a fundamental environmental parameter, the sound speed profile (SSP) is essential for reliable underwater communication, navigation, and acoustic localization \cite{Erol2011Loc,Luo2021LocReview,Yan2025Loc}.

Traditionally, SSPs are either directly measured via specialized instruments like sound speed profilers (SVPs) and conductivity, temperature, and depth (CTD) profilers \cite{Kirimoto2024CTD}, or inversely derived using sound field data through techniques such as matched field processing (MFP) \cite{Tolstoy1991MFP}, compressive sensing (CS) \cite{Bianco2017CS}, and data-driven deep learning (DL) \cite{Huang2021SSPInversion,Huang2024SurveySSP}. A key limitation of these methods is that they rely heavily on on-site operations and have strict requirements for the deployment of underwater equipment, making them difficult to be universally applicable to randomly given sea areas.

To achieve real-time estimation of SSPs and eliminate the demand for underwater on-site operations, two research directions have emerged. With the growth of Argo observations and satellite remote sensing, several methods combining sea surface temperature (SST) with historical sound-speed statistics has been proposed for SSP estimation, including models based on convolutional neural networks (CNN) \cite{WU2024CNN} and self-organized mapping neural networks (SOM) \cite{Xu2025SOM}. Although SST is effective for shallow water areas, it provides limited deep information and is difficult to capture changes in sound speed under the influence of dynamic processes such as ocean internal waves, which limits the applicability of these methods. Another line of work treats SSP estimation as a temporal prediction task that leverages historical SSP observations \cite{Lu2024LSTM,Liu2024SSPPrediction,Wu2025ALSTM,Vaswani2017Trans}. Lu et al. \cite{Lu2024LSTM} proposed a hierarchical long short-term memory (LSTM) neural network model for fast SSP prediction, in which a sound speed prediction model was constructed for each depth. Liu et al. \cite{Liu2024SSPPrediction} further considered the influence of spatial coordinates and combined convolutional modules with LSTM model for SSP prediction. Wu et al. \cite{Wu2025ALSTM} incorporated attention mechanisms into LSTM to improve its ability for capturing long-term temporal variations in sound speed, resulting in better prediction results than the Transformer model \cite{Vaswani2017Trans}. However, these SSP prediction methods are currently designed for specific ocean regions and have insufficient generalization ability under the influence of ocean dynamic processes of different levels of complexity.

A key challenge underlying SSP prediction is its intrinsically multi-scale temporal structure: slow seasonal trends, moderate diurnal cycles, and sharp high-frequency fluctuations induced by internal waves and other dynamic processes \cite{pineda2005river,gao2021internal}. Ocean sound speed evolution is governed by several physical processes
acting at distinct temporal scales. Seasonal stratification dominates long-term evolution, mixed-layer adjustment introduces intermediate variations, while internal waves generate abrupt fluctuations. These mechanisms coexist within the same SSP sequence, making SSP prediction fundamentally a multi-scale modeling problem. Most existing neural architectures rely on fixed and homogeneous representations, which makes them inherently sensitive to only part of this variability. Even recently emerging functional-representation networks, such as those inspired by the Kolmogorov–Arnold representation theory \cite {Liu2025KAN}, typically employ a single basis family and thus still struggle to uniformly capture depth-dependent variability across disparate temporal scales. This structural mismatch often leads to underfitting or overfitting in different layers of the water column, particularly when training samples are limited.

These limitations highlight the need for a flexible and data-efficient prediction framework capable of adapting to different temporal behaviors across depths while maintaining robustness in regions with sparse historical observations. Motivated by this requirement, we propose a lightweight hybrid basis-function empowered Kolmogorov-Arnold network (LHBF-KAN) that can represent diverse temporal patterns within a unified learning framework. By allowing different basis families to respond to different variability scales, the proposed architecture is able to capture complex multi-scale SSP evolution that conventional single-path networks cannot. To further ensure deployability for resource-constrained underwater platforms, we design a pruning mechanism that automatically suppresses low-contribution branches during training, yielding a compact model without compromising prediction accuracy. Our main contributions are summarized as follows:

\begin{itemize}
	\item We analyze the equivalence between time-domain sequence fitting and frequency-domain information fitting of finite energy signals, and theoretically prove the frequency domain approximation theorem for time series fitting, guiding the design approach of the model.
	\item We propose a hybrid basis-function architecture that leverages the complementary properties of multiple basis families to adapt to depth-dependent and scale-varying SSP dynamics, enhancing generalization across heterogeneous temporal behaviors.
    \item To mitigate the complexity introduced by multi-branch modeling, we propose a training-aware pruning strategy that eliminates branches with persistently low impact, enabling efficient deployment on hardware-limited underwater systems.
	\item Experiments on Argo data, South China Sea daily observations, and internal-wave affected ocean measurements demonstrate consistent improvements over state-of-the-art SSP prediction models, confirming the effectiveness and robustness of the proposed approach.
\end{itemize}
The rest of this paper is organized as follows. Section \uppercase\expandafter{\romannumeral2} summurize core related works. The data source and system model are given in Section \uppercase\expandafter{\romannumeral3}. Experimental results and discussions are given in Section \uppercase\expandafter{\romannumeral4}. Finally, the conclusion is given in Section \uppercase\expandafter{\romannumeral5}.

\section{Related Work}
The estimation and reconstruction of underwater SSPs have been extensively studied because of their importance for sonar detection, underwater communication, and ocean monitoring. Conventional methods depend mainly on direct on-site measurements with CTD and SVP instruments, which yield high accuracy but are expensive and offer limited spatial and temporal coverage \cite{Li2017Argo}. To address these limitations, researchers have proposed various model-based and data-driven approaches.

\subsection{SSP Inversion Methods}
Early methods for estimating sound speed distribution based on instrument measurements were time-consuming and labor-intensive. In 1990, Tolstoy \cite{Tolstoy1990MFP} first introduced MFP for SSP inversion, and it remained the dominant framework for many years \cite{Bianco2017CS}. In MFP-based approaches, empirical orthogonal functions (EOFs) are combined with heuristic algorithms to generate candidate SSPs, while ray tracing or normal-mode theory is used to simulate the sound field for comparison with measurements. Because heuristic algorithms are computationally complex, MFP-based methods have low execution efficiency.

To reduce computational complexity, Bianco et al. \cite{Bianco2016CS} and Choo et al. \cite{Choo2018CS} independently proposed SSP inversion methods based on CS theory, directly establishing a mapping from sound field distributions to SSPs. However, the accuracy is compromised because a first-order Taylor approximation is used in constructing the CS sensing matrix.

With advances in machine learning for underwater applications \cite{Bianco2019ML}, neural network models have been applied to SSP estimation \cite{Stephan1995ANN,Huang2024SurveySSP}. To improve SSP inversion accuracy, Huang et al. \cite{Huang2021SSPInversion} proposed an auto-encoder feature-mapping neural network that accounts for acoustic field measurement noise and employs an auto-encoder to extract robust latent features. To further improve performance in few-shot learning scenarios, they also proposed a meta-learning framework that simultaneously learns diverse sound speed distribution patterns to extract common features and accelerate convergence on target tasks \cite{Huang2023ML}.

Although SSP inversion methods offer substantial real-time advantages over instrument-based measurements, their reliance on acoustic field observations limits their applicability because sonar deployments are regionally constrained.

\subsection{Spatial Construction Methods}
Spatial interpolation techniques, such as inverse distance weighting (IDW), are widely used to construct ocean environmental fields \cite{Yan2020IDW}. These approaches posit that sound speed at a given location correlates closely with nearby measurements and that this correlation attenuates with distance. Although interpolation is simple and computationally efficient, it cannot capture nonlinear variability in complex marine environments.

CNNs have demonstrated strong performance in extracting local spatial features. Wu et al. \cite{WU2024CNN} proposed a CNN-based multi-source data fusion model for SSP reconstruction, achieving strong results by integrating SST, location information, and the principal components of historical SSP data. Meanwhile, Xu et al. \cite{Xu2025SOM} proposed an SOM model for SSP construction, exploiting its capacity to cluster high-dimensional data while preserving topological relationships. Although SOM based methods effectively simulate nonlinear patterns in historical SSP data, their performance depends on initialization and environmental conditions, and their robustness in shallow water often decreases. Compared to SOM, the performance of CNN is more stable. However, SST remote sensing can only capture sound speed variations in shallow waters and cannot reliably reflect deep‑sea sound speed changes.

\subsection{Temporal Prediction Methods}
The deployment of the Argo program and multinational shipboard experiments has generated a large archive of historical underwater SSPs. The sound speed prediction method leverages the temporal evolution of historical sound speed distributions to estimate current profiles, eliminating the need for on-site measurements.

Yu et al. \cite{Yu2020RBF} proposed a radial basis function (RBF) neural network for SSP prediction. They first used the RBF network to predict underwater temperature and salinity fields, then derived the sound speed distribution from those predictions. However, the model cannot capture the fine scale variations in the SSP, and its predictions tend to lean towards the average value of the profile.

To improve prediction accuracy, Lu et al. \cite{Lu2024LSTM} proposed a sound speed prediction method based on long short-term memory (LSTM) neural networks. Liu et al. \cite{Liu2024SSPPrediction} and Zhang et al. \cite{Zhang2024CNNLSTM} subsequently proposed a convolutional bidirectional LSTM network with an attention mechanism, which sequentially applies convolution, bidirectional LSTM, and attention modules to fully exploit the spatio-temporal features of sound speed and improve model accuracy. 

The sound speed prediction method leverages historical observations, obviating the need for on-site sound field measurements and enabling rapid forecasts of future sound speed. However, most existing sound speed prediction models focus on single region tasks, with a single scale of sound speed changes, making it difficult to achieve high accuracy in multi-scale prediction tasks.

\section{Data and Method} \label{section:method}
To design a flexible and data-efficient prediction framework capable of adapting to different temporal behaviors across depths while maintaining robustness in regions with sparse historical observations, we propose the LHBF-KAN that can represent diverse temporal patterns within a unified learning framework. In this section, we will introduce the data source used for evaluating the performance of LHBF-KAN, and the detail working principle of the proposed model.

\subsection{Data Source}
To evaluate the generalization ability of the proposed LHBF-KAN model, two public datasets, Argo \cite{Xie2019Argo} and SW06 \cite{Newhall2007SW06,Lin2010SW06}, and an hourly measured dataset from the South China Sea are adopted in this paper.

The GDCSM Argo \cite{Xie2019Argo} dataset is a global gridded product whose fields are generated using a gradient-dependent optimal interpolation method. The data spans from January 2004 to December 2024, covering the world’s oceans with a horizontal resolution of $1^\circ \times 1^\circ$, and comprising 58 vertical layers with thicknesses ranging from 5 meters near the surface to 1,000 meters and deeper. The Argo data points used in this study are located at ($16.5^\circ$N, $168.5^\circ$E), as shown in Fig.~\ref{fig1}. The collected data comprise 60 monthly averages of sound speed for the period 2016–2020.

The SW06 dataset originates from the 2006 Shallow Water Experiment \cite{Newhall2007SW06,Lin2010SW06} conducted at ($39.03^\circ$N, $73.05^\circ$W). Further details including the instrumentation, and deployment extent are provided in \cite{sw06}. The raw data of SW06 dataset are temperature values that can be visualized as a heatmap in Fig.~\ref{fig2}. The time span represented in the heatmap ranges from 21:00 on 17 August 2006 to 10:00 on 18 August 2006, with a sampling interval of 30 seconds. The experimental data segment enclosed by the red frame in Fig.~\ref{fig2} contains 300 temperature samples. To facilitate model evaluation, we adopts Coppens’s empirical formula\cite{Coppens1981} to convert temperature values to sound speed. The SW06 raw data were sampled at seven different depths between 1 and 80 meters. To achieve a standard depth resolution of 1–80 meters, the original data were linearly interpolated to yield samples at 1 meter intervals.

\begin{figure}[!htbp]
	\centering
	\includegraphics[width=0.7\linewidth]{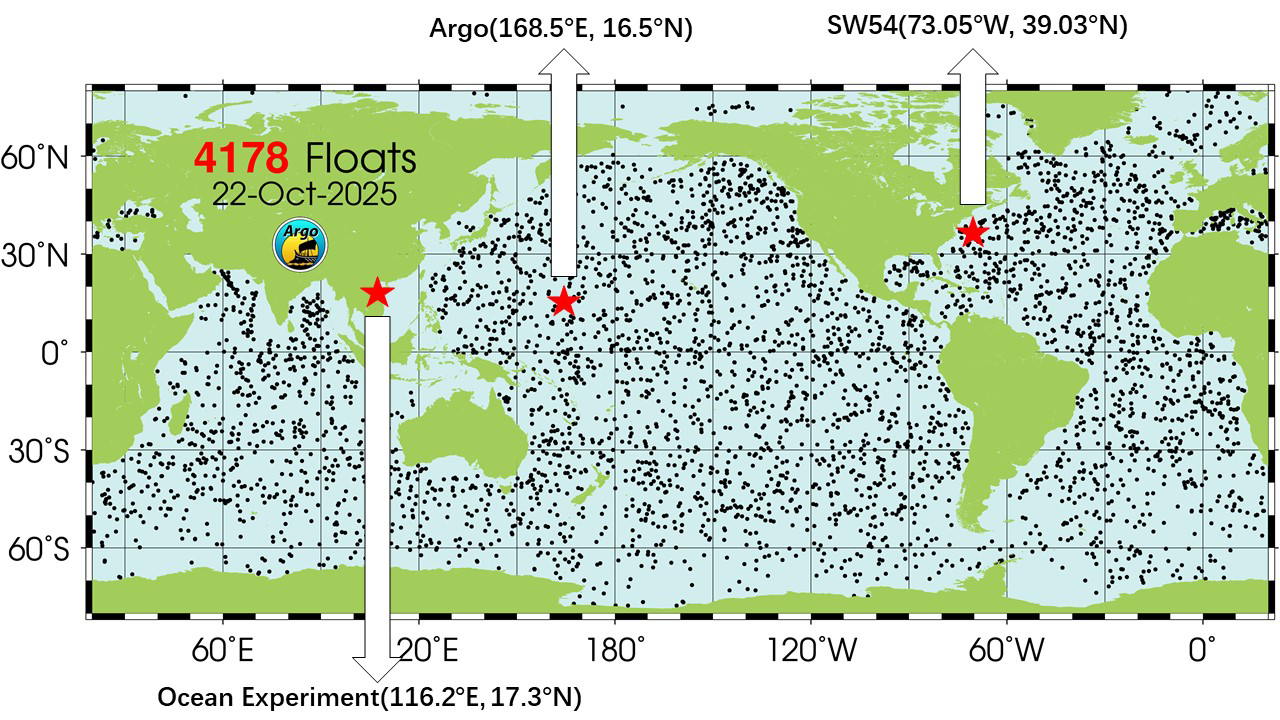}
	\caption{Data locations.}
	\label{fig1}
\end{figure}

The hourly dataset from the South China Sea were collected at ($17.3^\circ$N, $116.2^\circ$E) through CTDs and expendable CTDs with a sampling interval of approximately 2 hours.

\begin{figure*}[!htbp]
    \centering
    \includegraphics[width=1\linewidth]{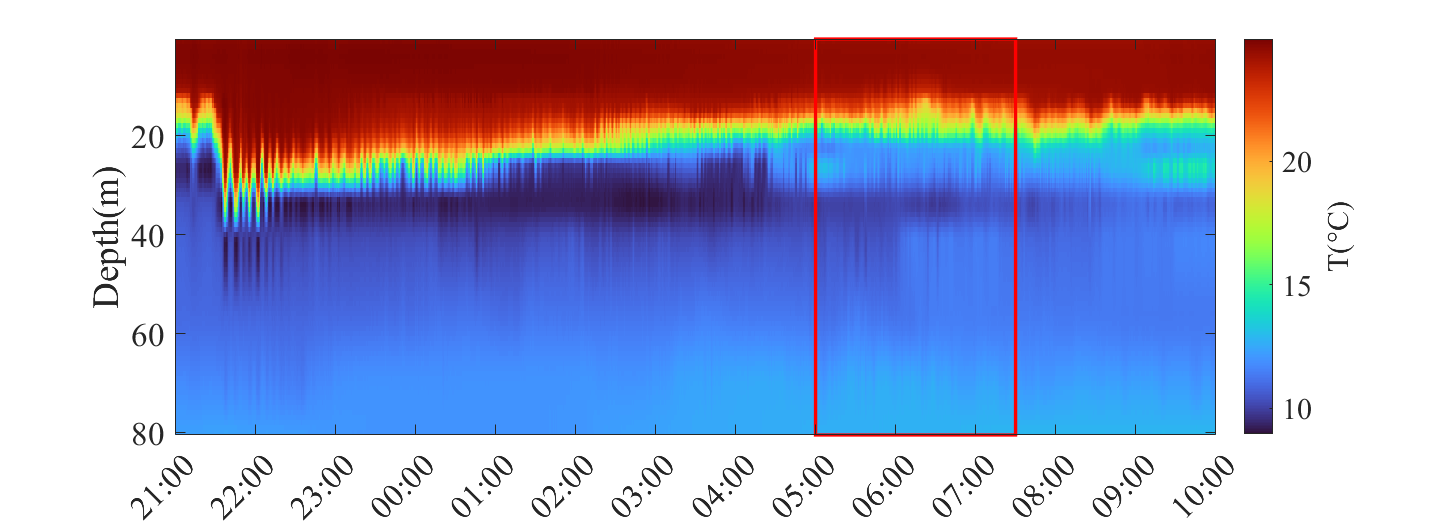}
    \caption{Heat map of SW06 temperature data.}
    \label{fig2}
\end{figure*}

\subsection{Data Preprocessing}
The data were split into training and test sets in an 8:2 ratio. After splitting, each dataset was standardized using the following formula:
\begin{equation}
    \boldsymbol{S} = \frac{\bar{\boldsymbol{S}} - \boldsymbol{\mu}}{\boldsymbol{\sigma}},
\end{equation}
where $\bar{\boldsymbol{S}}$ is the raw SSP data, $\boldsymbol{S}$ is the standardized SSP, $\boldsymbol{\mu} = [\mu_1,\mu_2,...,\mu_{\rm z}]^T$ is the mean value vector of the sequence data, $\boldsymbol{\sigma} = [\sigma_1,\sigma_2,...,\sigma_{\rm z}]^T$ is the standard deviation of the sequence data, and $z = 1,2,...,Z$ is the depth layer.

Since the data distribution differs across depth layers, standardization was performed separately for each depth. To prevent data leakage, only the training-set mean and standard deviation were used to transform both the training and test sets.

During training, the mean squared error (MSE) is used as the loss function:
\begin{equation}
    \text{MSE} = (\hat{s}_{z,t+1} - s_{z,t+1})^{2},
\end{equation}
where $\hat{s}_{z,t+1}$ and $s_{z,t+1}$ represent the predicted and actual sound velocities at the $z$th depth layer, respectively.

To evaluate the prediction performance of the model, MSE, root mean square error (RMSE), mean absolute error (MAE), and coefficient of determination ($R^2$) are used as evaluation metrics. These metrics are calculated as:
\begin{equation}
	\text{RMSE} = \sqrt{\frac{1}{Z}\sum_{z=1}^Z (\hat{s}_{z,t+1} - s_{z,t+1})^{2}},
\end{equation}

\begin{equation}
	\text{MAE} = \frac{1}{Z}\sum_{z=1}^Z \big|\hat{s}_{z,t+1} - s_{z,t+1}\big|,
\end{equation}

\begin{equation}
	R^2 = 1-\frac{\displaystyle\sum_{z=1}^Z \big(\hat{s}_{z,t+1}-s_{z,t+1}\big)^2}{\displaystyle\sum_{z=1}^Z \big(s_{z,t+1}-\bar s\big)^2},
\end{equation}
where $Z$ denotes the maximum depth and $\bar s$ is the mean of the true values.

\subsection{Frequency-Domain Motivation}
The prediction of underwater sound speed distribution can be regarded as time series prediction issues for finite energy signals at different depths. It is difficult to directly capture multi-scale varieties in sound speed from time series data, and the characteristics of ocean dynamic processes and historical empirical data indicate that varieties in sound speed distribution are mainly concentrated in medium and low frequency band. Therefore, the core design principle for the model is to ensure that the frequency domain characteristics of the model expression function can cover the potential frequency domain range of sound speed changes. We will first prove that time-domain fitting of finite energy signals is equivalent to frequency-domain approximation.
\begin{theorem}\label{thr1}
	For any finite-energy signals $x(t),g(t)\in L^2(\mathbb{R})$ with Fourier transforms $X(\omega)$ and $G(\omega)$,
	the time-domain approximation error satisfies $\|x-g\|_2^2	=	\frac{1}{2\pi}\|X-G\|_2^2.$	Therefore, minimizing the frequency-domain approximation error is equivalent to minimizing the time-domain reconstruction error.
\end{theorem}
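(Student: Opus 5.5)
The plan is to reduce the statement to Parseval's (Plancherel's) identity applied to the difference signal. Set $e(t)=x(t)-g(t)$. Since $L^2(\mathbb{R})$ is a vector space, $e\in L^2(\mathbb{R})$. The Fourier transform $\mathcal{F}$ on $L^2(\mathbb{R})$ is linear, so under the convention $X(\omega)=\int_{\mathbb{R}} x(t)e^{-j\omega t}\,dt$ we have $\mathcal{F}e=X-G$. The claimed identity is then the Plancherel identity $\|e\|_2^2=\frac{1}{2\pi}\|\mathcal{F}e\|_2^2$ evaluated at this $e$.

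The key steps are as follows.
\begin{enumerate}
\item Prove Plancherel first for $e\in L^1(\mathbb{R})\cap L^2(\mathbb{R})$. Form the autocorrelation $r(\tau)=\int e(t)\overline{e(t-\tau)}\,dt$. This function is continuous and integrable, and its Fourier transform is $|E(\omega)|^2$.
\item To pass from $|E|^2$ back to $r(0)=\|e\|_2^2$, insert a Gaussian damping factor $e^{-\epsilon\omega^2}$, apply Fubini, and let $\epsilon\to 0$. Continuity of $r$ at $0$ handles the left side, and monotone convergence handles the right side. This gives $\|e\|_2^2=\frac{1}{2\pi}\int|E(\omega)|^2\,d\omega$.
\item Extend to general $L^2$ functions. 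By step 2, $\mathcal{F}/\sqrt{2\pi}$ is an isometry on the dense subspace $L^1\cap L^2$, so it extends uniquely by continuity to all of $L^2(\mathbb{R})$. The Fourier transform of an arbitrary $L^2$ signal is defined as this extension, and the identity persists by continuity of norms.
\item Apply the identity to $e=x-g$, using linearity of the extended transform, to obtain $\|x-g\|_2^2=\frac{1}{2\pi}\|X-G\|_2^2$.
\item For the equivalence of the two minimization problems, take any family of candidate approximants $g\in\mathcal{G}\subset L^2$. The two objectives $J_t(g)=\|x-g\|_2^2$ and $J_f(g)=\|X-G\|_2^2$ differ only by the positive constant factor $2\pi$. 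Hence they have identical minimizers and proportional optimal values, and any sequence that is minimizing for one is minimizing for the other.
\end{enumerate}

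The main obstacle is only technical. A general $L^2$ signal need not be integrable, so the defining Fourier integral may fail to exist. This is handled by the density and extension argument in step 3, for example by taking $X$ as the $L^2$-limit of $\int_{-T}^{T}x(t)e^{-j\omega t}\,dt$ as $T\to\infty$. Since the paper allows citing standard results, I would state Plancherel's theorem with the $\frac{1}{2\pi}$ normalization matching the unnormalized angular-frequency convention, and mainly verify that convention, rather than reprove it in full.
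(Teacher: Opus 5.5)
Your proposal is correct and follows the same route as the paper: form the difference signal $e$, note $e\in L^2(\mathbb{R})$ with Fourier transform $\pm(X-G)$, and apply Parseval's identity with the $\frac{1}{2\pi}$ normalization. The paper simply cites Parseval. Your extra material goes beyond it: the Plancherel sketch (autocorrelation, Gaussian damping, density extension to all of $L^2$) fills in that standard result, and your step 5 states the minimization equivalence explicitly (the two objectives differ by the constant factor $2\pi$), which the paper leaves implicit.
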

\begin{proof}
	Let $e(t)=g(t)-x(t),$ whose Fourier transform is $E(\omega)=G(\omega)-X(\omega).$ Since both $x(t)$	and	$g(t)$	belong to $L^2(\mathbb{R})$, their difference $e(t)$ also belongs to $L^2(\mathbb{R})$ based on the Hilbert space closure.
	
	According to Parseval's identity, $\int_{-\infty}^{\infty}	|e(t)|^2dt	=\frac1{2\pi}\int_{-\infty}^{\infty}	|E(\omega)|^2d\omega.$	Substituting $E(\omega)=G(\omega)-X(\omega)$ gives $\|x-g\|_2^2=\frac1{2\pi}\|X-G\|_2^2,$
	which proves the theorem.
\end{proof}
\begin{corollary}
	For any finite-energy signal, minimizing the frequency-domain approximation error $\|X(\omega)-G(\omega)\|_2$ is equivalent to minimizing the time-domain reconstruction error $\|x(t)-g(t)\|_2.$ Therefore, improving spectral approximation capability directly improves the representation capability of temporal signals.
\end{corollary}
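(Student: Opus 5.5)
The corollary follows directly from Theorem~\ref{thr1}. The plan is to show that the two error functionals differ only by a fixed positive constant and a monotone transformation, so they have identical minimizers over any admissible class of approximants.

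First, fix a target signal $x\in L^2(\mathbb{R})$. Let $\mathcal{G}\subseteq L^2(\mathbb{R})$ be any family of candidate approximants, for instance the functions representable by the network. For each $g\in\mathcal{G}$, Theorem~\ref{thr1} gives $\|x-g\|_2^2=\frac{1}{2\pi}\|X-G\|_2^2$. Taking nonnegative square roots, which is legitimate because both sides are nonnegative, we get
\begin{equation*}
\|x-g\|_2=\frac{1}{\sqrt{2\pi}}\|X-G\|_2 .
\end{equation*}
So the map $g\mapsto\|x-g\|_2$ is the composition of $g\mapsto\|X-G\|_2$ with the strictly increasing function $u\mapsto u/\sqrt{2\pi}$.

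Second, from this monotone relation I would deduce the equivalence of the two minimization problems in two parts:
\begin{itemize}
\item \emph{Infima:} $\inf_{g\in\mathcal{G}}\|x-g\|_2=\frac{1}{\sqrt{2\pi}}\inf_{g\in\mathcal{G}}\|X-G\|_2$.
\item \emph{Minimizers:} $g^\star$ minimizes one functional if and only if it minimizes the other.
\end{itemize}
The same argument carries over to minimizing sequences. A sequence $g_n$ with $\|X-G_n\|_2\to\inf$ also satisfies $\|x-g_n\|_2\to\inf$, and any decrease in the spectral error produces a proportional decrease in the temporal error. This proportionality is the precise meaning of the last sentence of the corollary: improving spectral approximation directly improves temporal representation.

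There is no real obstacle, since the argument is pure bookkeeping on top of Parseval's identity. The only point needing care is interpreting "equivalent" rigorously. I would state it as equality of minimizer sets together with the scaling of optimal values, and I would note that it requires $\mathcal{G}\subseteq L^2(\mathbb{R})$, so that every $G$ is well defined through the Plancherel extension of the Fourier transform.
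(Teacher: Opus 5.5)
Your proof is correct and follows the paper's route. The paper gives no separate proof and presents the corollary as an immediate consequence of Theorem~\ref{thr1}; your identity $\|x-g\|_2=\frac{1}{\sqrt{2\pi}}\|X-G\|_2$ for every candidate $g$, which gives the same minimizers and optimal values differing by a fixed factor, supplies exactly what the paper leaves implicit, and your explicit restriction to candidates in $L^2(\mathbb{R})$ is a useful precision the paper omits.
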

Theorem~\ref{thr1} establishes the theoretical foundation of the proposed model. Therefore, instead of directly improving
time-domain fitting, the proposed architecture is designed to enhance spectral approximation capability.

\subsection{Structure of LHBF-KAN}
To improve the generalization ability of the model in handling multi-scale change data, we propose the LHBF-KAN model. The overall architecture of LHBF-KAN is shown in Fig.~\ref{fig3}, which consists of an input layer, a basis-function mixing layer, a fully connected layer, and an output layer. It incorporates residual connections to improve training stability.

\begin{figure*}[!htbp]
    \centering
    \includegraphics[width=1\linewidth]{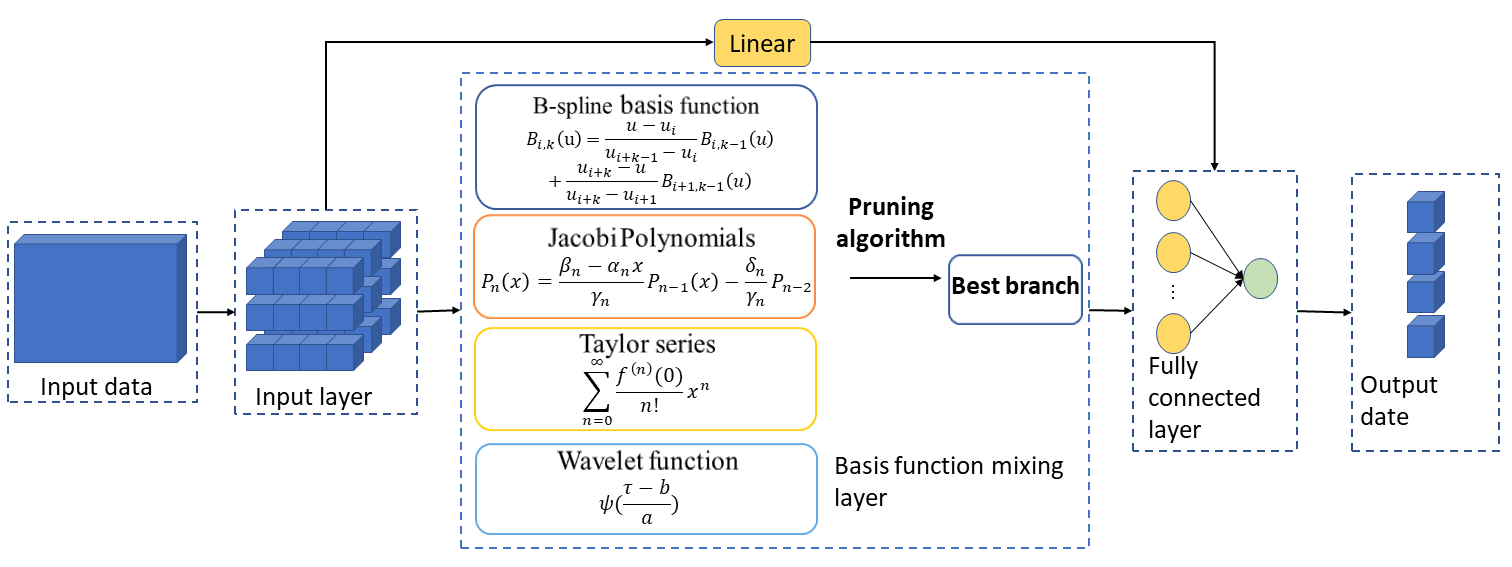}
    \caption{Structure of LHBF-KAN.}
    \label{fig3}
\end{figure*}

\subsubsection{Input layer}
Let $\boldsymbol{S}$ denote the input sound speed matrix:
\begin{equation}
    \boldsymbol{S}=\begin{bmatrix}
		s_{\rm 1,1} & \cdots & s_{\rm 1,T} \\
		\vdots & \ddots & \vdots \\
		s_{\rm z,1} & \cdots & s_{\rm z,T} \\
	\end{bmatrix}, z = 1,2,...,Z,
\end{equation}
where $T$ denotes the maximum time span. The input data are first segmented into several smaller segments. This process can be expressed as:
\begin{equation}
    \boldsymbol{S}=\begin{bmatrix}
		s_{\rm 1,1} & \cdots & s_{\rm 1,T} \\
		\vdots & \ddots & \vdots \\
		s_{\rm z,1} & \cdots & s_{\rm z,T} \\
	\end{bmatrix}
    \implies
    \boldsymbol{S}=\begin{bmatrix}
		\boldsymbol{S}_{\rm 1,1} & \cdots & \boldsymbol{S}_{\rm 1,i} \\
		\vdots & \ddots & \vdots \\
		\boldsymbol{S}_{\rm z,1} & \cdots & \boldsymbol{S}_{\rm z,i} \\
	\end{bmatrix},
\end{equation}
where $i$ denotes the number of segments obtainable per layer, and $\boldsymbol{S}_{\rm z,i}=[s_{\rm z,1},s_{\rm z,2},...,s_{\rm z,t}], z = 1,2,...,Z$ denotes the $i$th data sequence segmented at a given depth layer. The size of a segmented sample is equivalent to the look-back window used in time-series forecasting. In practice, we choose the sample size based on the inherent temporal scale of variation in the dataset. For example, considering the annual periodicity of the Argo dataset, we set the segmented sample size of Argo to be 12 months.

\subsubsection{Basis function mixture layer}
The second layer of the model is a basis-function mixture layer, which integrates four types of basis functions: B‑splines, Jacobi polynomials, shannon wavelets, and Taylor series.

\paragraph{B-spline base function}
The Fourier transform of a B-spline function of order 0 is given by
\begin{equation}
	\hat B_0(\omega) = \int_{0}^{1} 1 \cdot e^{-j\omega x}\, \mathrm{d}x = \frac{1-e^{-j\omega}}{i\omega}.\label{eq12}
\end{equation}
Extending this to order m, there is
\begin{equation}
	\hat B_m(\omega) =  \left(\frac{1-e^{-j\omega}}{i\omega}\right)^{m+1}.
\end{equation}
Using Euler formulation simplification, there is
\begin{equation}
	\hat B_m(\omega) =  \left(\frac{sin(\omega /2)}{\omega /2}\right)^{m+1}e^{-j(m+1)\omega /2}.
\end{equation}
The magnitude spectrum is $|\hat B_m(\omega)|=\left|\frac{\sin(\omega/2)}{\omega/2}\right|^{m+1},$ whose energy is mainly concentrated around the origin.

\begin{lemma}
	The Fourier spectrum of an $m$th-order B-spline satisfies $|\hat B_m(\omega)|\propto{\rm sinc}^{m+1}(\omega)$, 
	whose energy is mainly concentrated in the low-frequency band. Therefore B-spline basis functions are suitable for
	approximating slowly varying components.
\end{lemma}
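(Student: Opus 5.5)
We establish the spectral form of the $m$th-order B-spline and then quantify the low-frequency energy concentration. First, we write $B_m = B_0 * B_0 * \cdots * B_0$ ($m+1$ factors), where $B_0=\mathbf{1}_{[0,1)}$. By the convolution theorem, $\hat B_m(\omega)=\hat B_0(\omega)^{m+1}$. Computing $\hat B_0$ directly as in \eqref{eq12} and factoring $e^{-j\omega/2}$ gives
\[
\hat B_0(\omega)=\frac{e^{-j\omega/2}\left(e^{j\omega/2}-e^{-j\omega/2}\right)}{j\omega}=\frac{\sin(\omega/2)}{\omega/2}\,e^{-j\omega/2}.
\]
Hence $|\hat B_m(\omega)|=|\mathrm{sinc}(\omega/2)|^{m+1}$, where $\mathrm{sinc}(u)=\sin u/u$. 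This is the claimed proportionality to a power of sinc; the argument is rescaled by $1/2$, which does not affect the qualitative statement. Since $B_m\in L^2$, Parseval's identity (as used in Theorem~\ref{thr1}) gives the total energy $\int|\hat B_m|^2\,d\omega/2\pi=\|B_m\|_2^2<\infty$, so the energy distribution over frequency is well defined.

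Second, we make ``mainly concentrated in the low-frequency band'' precise as a tail bound. Since $|\sin(\omega/2)|\le 1$, for $|\omega|\ge\Omega$ we have $|\hat B_m(\omega)|^2\le (2/|\omega|)^{2(m+1)}$. Integrating yields
\[
\int_{|\omega|\ge\Omega}|\hat B_m(\omega)|^2\,d\omega\le \frac{2\cdot 2^{2m+2}}{(2m+1)\,\Omega^{2m+1}},
\]
which decays polynomially in $\Omega$, and faster for larger $m$. For the denominator we need a lower bound on the energy in the main lobe $|\omega|\le 2\pi$. On $|\omega|\le\pi$ we have $|\sin(\omega/2)/(\omega/2)|\ge 2/\pi$, by concavity of $\sin$ on $[0,\pi/2]$. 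So the energy there is at least $2\pi(2/\pi)^{2m+2}$.

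Combining the two bounds shows that the fraction of energy outside $|\omega|\le\Omega$ tends to zero as $\Omega\to\infty$. More sharply, for $\Omega=2\pi$ this fraction is a small explicit constant that is already below a few percent for $m=0$. The exact value for $m=0$ follows from $\int\mathrm{sinc}^2=\pi$ in the appropriate scaling. For $m\ge1$ the fraction decreases, because the sidelobes, where $|\mathrm{sinc}|<1$, are raised to a higher power relative to the main lobe.

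The main obstacle is the second step: the lemma's phrase ``mainly concentrated'' is informal, and the crude bound near $\Omega=2\pi$ is not tight for $m=0$. I would first try proving monotonicity in $m$ of the out-of-band energy ratio, using the pointwise fact that $|\mathrm{sinc}|^{2}$ is smaller on the sidelobes than on the main lobe. I would then verify the $m=0$ case numerically or via the known value of $\int_0^{\pi}\mathrm{sinc}^2$. The final sentence of the lemma, about suitability for slowly varying components, then follows from Theorem~\ref{thr1}: approximation error is measured spectrally, and a basis whose energy lies in low frequencies can efficiently represent signals whose spectrum is concentrated there.
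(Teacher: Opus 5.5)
Your first step matches the paper's proof. You write $\hat B_m=\hat B_0^{m+1}$ from the convolution structure, factor with Euler's formula, and obtain $|\hat B_m(\omega)|=|\sin(\omega/2)/(\omega/2)|^{m+1}$. You are also right that the lemma's ``$\propto\mathrm{sinc}^{m+1}(\omega)$'' is really a rescaling of the argument, not a proportionality. The paper stops at ``the amplitude rapidly decays.'' Your tail bound $\int_{|\omega|\ge\Omega}|\hat B_m|^2\,d\omega\le 2^{2m+3}/\bigl((2m+1)\Omega^{2m+1}\bigr)$, together with the concavity (Jordan) lower bound on $|\omega|\le\pi$, is a genuine sharpening along the same route. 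Note, though, that ``the out-of-band fraction tends to zero as $\Omega\to\infty$'' holds for every $L^2$ function. What is specific to B-splines is the rate $\Omega^{-(2m+1)}$ and its improvement with $m$.

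The numbers you attach to this are wrong in two places.

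\textbf{The $m=0$ value.} The energy outside $|\omega|\le 2\pi$ is not below a few percent. With $\mathrm{Si}(x)=\int_0^x\frac{\sin t}{t}\,dt$, integration by parts gives $\int_0^\pi\frac{\sin^2u}{u^2}\,du=\mathrm{Si}(2\pi)\approx1.418$. So the in-band fraction is $\frac{2}{\pi}\mathrm{Si}(2\pi)\approx0.903$, and about $9.7\%$ lies outside. The identity $\int\mathrm{sinc}^2=\pi$ only gives the total energy $2\pi$, not this split.

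\textbf{The monotonicity heuristic.} It does not hold pointwise. $|\mathrm{sinc}|<1$ on the main lobe as well. Near $|\omega|=2\pi$ the main-lobe values drop below the first sidelobe peak ($\approx0.217$), so raising to a higher power does not uniformly favour the main lobe.

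You do not need monotonicity. Your own two bounds at $\Omega=2\pi$ already give an out-of-band fraction of at most $1/\bigl((2m+1)2^{2m+1}\bigr)$. That is at most $1/24$ for $m=1$ and explicitly decreasing in $m$. The case $m=0$ is settled by the exact computation above, and roughly $90\%$ in the main lobe still supports ``mainly concentrated.'' So the obstacle you flag disappears once the $m=0$ value is corrected.

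Finally, the lemma's last sentence is interpretive in both your version and the paper's. Theorem~\ref{thr1} equates time- and frequency-domain errors, but it does not by itself imply that a low-pass basis is the right choice.
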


\begin{proof}
	From the above expression, $|\hat B_m(\omega)|=\left|\frac{\sin(\omega/2)}{\omega/2}\right|^{m+1},$	whose amplitude rapidly decays as $|\omega|$ increases.	Therefore, most spectral energy is concentrated in the low-frequency band, making B-splines suitable for approximating slowly varying components.
\end{proof}

\paragraph{Wavelet (shannon) base function}
The frequency domain expression of Shannon mother wavelet is
\begin{equation}
	\hat \psi(\omega) =\begin{cases} e^{-j\omega /2},  & \pi<\left| \omega \right| \le 2\pi \\0, & \mbox{others}\end{cases}.
\end{equation}
Different wavelet basis functions can be generated through scale and translation transformation, with frequency domain expression as
\begin{equation}
	\hat{\psi}_{\alpha,\beta}(\omega) =
	\begin{cases}
		\sqrt{\alpha} e^{-j\omega \beta} e^{-j \alpha \omega / 2}, & \dfrac{\pi}{\alpha} < |\omega| \le \dfrac{2\pi}{\alpha} \\
		0, & \text{others}
	\end{cases},\label{eq17}
\end{equation}
where $\alpha$ is scale factor and $\beta$ is translation factor. 

\begin{lemma}
	Shannon wavelets provide localized band-pass representations and therefore are suitable for	capturing medium- and high-frequency variations.
\end{lemma}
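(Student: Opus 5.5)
The plan is to argue directly from the closed-form spectrum in \eqref{eq17}, in the same way the B-spline lemma was read off from the magnitude spectrum. The statement has two parts, and I would establish them in turn.

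\emph{Band-pass property.} For fixed $\alpha>0$ and any $\beta$, \eqref{eq17} gives
\[
|\hat\psi_{\alpha,\beta}(\omega)|=\sqrt{\alpha}\,\mathbf{1}\{\pi/\alpha<|\omega|\le 2\pi/\alpha\}.
\]
This magnitude vanishes identically on $|\omega|\le\pi/\alpha$, so there is no DC or low-frequency content, and it also vanishes on $|\omega|>2\pi/\alpha$. The spectrum is therefore an ideal band-pass window of one octave $[\pi/\alpha,2\pi/\alpha]$. I would note that every frequency in the band carries the same energy, in contrast with the $\mathrm{sinc}^{m+1}$ decay of the B-spline. Consequently, for any signal $x$, the projection onto the span of $\{\psi_{\alpha,\beta}\}_\beta$ only captures $X(\omega)$ restricted to that band.

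\emph{Localization and coverage.} In time, the inverse transform of the indicator is a difference of two sinc functions centred at $\beta+\alpha/2$, decaying like $1/|t|$. Hence $\beta$ controls the temporal location, which gives localization up to the sinc tail. Taking dyadic scales $\alpha=2^{-k}$, the bands $(2^k\pi,2^{k+1}\pi]$ are disjoint and tile $|\omega|>\pi$. So for any finite-energy $x$ with $X=0$ on $|\omega|\le\pi$, the error $\|X-G\|_2$ can be driven to zero using Shannon wavelets, while the small-$k$, that is larger-$\alpha$, scales and their complement handle the rest. By Theorem~\ref{thr1}, this spectral approximation transfers to the time domain. Choosing $\alpha\lesssim 1$ places the passbands at medium and high frequencies, which complements the low-pass B-spline branch.

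The main obstacle is that ``suitable for capturing'' is not a precise mathematical claim. I would formalize it as follows: for a signal whose spectrum is supported in $\pi/\alpha<|\omega|\le 2\pi/\alpha$, the Shannon family at scale $\alpha$ represents it exactly. This follows from Paley–Wiener sampling, via orthonormality of the translates $\psi_{\alpha,\alpha n}$. A B-spline of fixed order $m$, by contrast, attenuates that band by a factor of order $(\alpha/\pi)^{m+1}$, using the B-spline lemma. Checking orthonormality of the translates, by periodizing $|\hat\psi|^2$, is the step that needs the most care.
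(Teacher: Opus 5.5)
Your band-pass step is the paper's entire proof. The paper only observes that the spectrum in \eqref{eq17} is supported on a finite interval and concludes that each wavelet responds selectively to one band. Everything after that goes beyond the paper, and it is worth having.

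The support argument alone does not say \emph{which} band: for $\alpha>1$ the passband $(\pi/\alpha,2\pi/\alpha]$ lies at low frequencies. So the paper never actually justifies ``medium- and high-frequency.'' You do, by restricting to $\alpha\lesssim 1$ and using the dyadic tiling of $|\omega|>\pi$.

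You also replace the informal ``suitable'' with a checkable claim: signals band-limited to $(\pi/\alpha,2\pi/\alpha]$ are represented exactly by the translates $\psi_{\alpha,\alpha n}$. You contrast this with the in-band gain of a B-spline, which is at most $(2\alpha/\pi)^{m+1}$. That contrast presumes a fixed knot spacing, as the paper's B-spline lemma does.

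Your orthonormality check goes through. The two-sided passband is a fundamental domain for translation by $2\pi/\alpha$, so $\sum_{k}|\hat\psi_{\alpha,0}(\omega+2\pi k/\alpha)|^2=\alpha$ almost everywhere. You should state explicitly that the same tiling gives completeness of the translates in the band-limited subspace, because orthonormality by itself does not give exact representation. The relevant fact here is band-pass sampling rather than classical Paley--Wiener sampling.

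Your remark that the time-domain decay is only $1/|t|$ is more accurate than the paper's unqualified ``localized.'' In short, the paper's route buys brevity but establishes only band selectivity, while yours proves a precise version of the full statement.

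One caveat applies to both arguments: they concern the ideal spectrum \eqref{eq17}. The kernel the network actually uses, $\operatorname{sinc}(x/\pi)w_{\text{ham}}(x)$, is a windowed low-pass sinc rather than this band-pass wavelet, so neither argument describes the implemented branch.
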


\begin{proof}
	Equation \ref{eq17} shows that the Fourier transform of the	Shannon wavelet is nonzero only inside a finite frequency
	interval. Therefore each wavelet basis selectively responds to a specific frequency band.
\end{proof}

\paragraph{Taylor base function}
For a monomial basis $x^k,$ its Fourier transform satisfies
\begin{equation}
	\mathcal F\{x^k\}
	=
	j^k
	\delta^{(k)}(\omega),
\end{equation}
where $\delta^{(k)}$ denotes the $k$-th derivative of the Dirac delta function.

\begin{lemma}
	Finite-order Taylor bases mainly describe DC and ultra-low-frequency components.
\end{lemma}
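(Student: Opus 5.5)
The plan is to use the distributional Fourier transform stated just before the lemma and to read off its support. A finite-order Taylor basis spans polynomials $p(x)=\sum_{k=0}^{K}c_k x^k$. By linearity of the Fourier transform on tempered distributions, together with $\mathcal F\{x^k\}=j^k\delta^{(k)}(\omega)$, we obtain
\begin{equation*}
\mathcal F\{p\}(\omega)=\sum_{k=0}^{K} c_k\, j^k\, \delta^{(k)}(\omega).
\end{equation*}
Every $\delta^{(k)}$ is supported exactly at $\{0\}$, because $\langle \delta^{(k)},\varphi\rangle=(-1)^k\varphi^{(k)}(0)$ vanishes for every test function $\varphi$ whose support avoids the origin. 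Hence $\operatorname{supp}\mathcal F\{p\}\subseteq\{0\}$, so in the global distributional sense a finite-order Taylor expansion carries only DC content. The $k=0$ term is the pure DC component, and the terms with $k\ge1$ encode derivatives of the spectrum at $\omega=0$, that is, polynomial trends.

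A polynomial is not in $L^2(\mathbb R)$, whereas the model actually uses it on a bounded look-back window (after standardization). So the second step is to justify the phrase ``ultra-low-frequency'', which needs something beyond DC alone. I would multiply $p$ by the window indicator $\mathbf 1_{[0,L]}$, or by a smooth window $w$, and compute
\begin{equation*}
\mathcal F\{p\,w\}(\omega)=\sum_k c_k\, j^k\, \hat w^{(k)}(\omega).
\end{equation*}
This uses the standard rule $\mathcal F\{x^k w\}=j^k \frac{d^k}{d\omega^k}\hat w$. Each term is a derivative of the window spectrum, and the window spectrum is a sinc-type function of width $O(1/L)$, the same profile already analysed in the B-spline lemma. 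Its energy therefore concentrates in $|\omega|\lesssim 2\pi/L$.

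The main obstacle is that derivatives of a sinc decay only like $|\omega|^{-1}$, the same as the sinc itself. Concentration is therefore a statement about energy fraction, not strict band-limitation. I would handle this by taking a smooth window, so that $\hat w$ is a Schwartz function, and bounding the tail fraction
\begin{equation*}
\int_{|\omega|>\Omega}|\mathcal F\{pw\}|^2\,d\omega \Big/ \|pw\|_2^2.
\end{equation*}
Alternatively, I would argue qualitatively, in the same style as the B-spline lemma, that the window scale fixes a cutoff of order $1/L$. Combining this with Theorem~\ref{thr1}, the best Taylor approximation of a signal then mainly reduces the error in the DC and ultra-low bands, which is the sense intended in the lemma.
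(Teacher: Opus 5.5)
Your first step is the paper's proof. The paper only observes that $\mathcal F\{x^k\}=j^k\delta^{(k)}(\omega)$ is located at $\omega=0$ and concludes that the spectral content sits at DC; your remark that $\operatorname{supp}\delta^{(k)}=\{0\}$ makes exactly this precise. The windowing step has no counterpart in the paper. It is a sensible addition, because $x^k\notin L^2(\mathbb R)$ and $\delta^{(k)}$ carries no ``spectral energy'', so neither the paper's wording nor Theorem~\ref{thr1} applies literally without truncation. (In $T_{func}$ the monomials act on the standardized sound-speed values. The bounded domain that justifies truncating is therefore the range of those values, not the 12-step look-back window; this only changes what $L$ means.)

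The inference inside that step fails, though. Knowing that each term $c_k j^k\hat w^{(k)}$ lives at scale $1/L$ does not make their sum concentrated in $|\omega|\le 2\pi/L$. The monomials are nearly collinear on the window, so their low-frequency parts can cancel. Take the indicator window, set $u=\omega L/2$, and let $p(x)=P_2(2x/L-1)$ be the Legendre quadratic. Then $|\mathcal F\{p\mathbf 1_{[0,L]}\}(\omega)|=L\,\bigl|\bigl(\tfrac{3}{u^3}-\tfrac{1}{u}\bigr)\sin u-\tfrac{3\cos u}{u^2}\bigr|$. This vanishes like $\omega^2$ at DC, peaks near $\omega\approx 6.7/L$, and keeps only about $31\%$ of its energy in $|\omega|\le 2\pi/L$, already at degree two.

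What is provable for a fixed order $K$ is a $K$-dependent statement. By $\mathcal F\{f(\cdot/L)\}(\omega)=L\hat f(L\omega)$ you may take $L=1$. The tail fraction $\int_{|\omega|>\Omega}|\widehat{pw}|^2\,d\omega/\|\widehat{pw}\|_2^2$ is continuous on the unit sphere of the finite-dimensional space of polynomials of degree at most $K$, and it decreases to $0$ as $\Omega\to\infty$. Dini's theorem then gives a uniform $\Omega_K(\varepsilon)$ such that at most an $\varepsilon$ fraction of the windowed energy lies outside $|\omega|\le\Omega_K(\varepsilon)/L$. This already works for the indicator window, so the smooth window is unnecessary; the $|\omega|^{-1}$ decay only affects how $\Omega_K$ depends on $\varepsilon$. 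However, $\Omega_K(\varepsilon)$ grows with $K$ and the DC component can vanish. The windowed picture therefore supports ``a low band whose width grows with the order''. ``DC'' is justified only in the distributional sense of your first step, which is all the paper's proof asserts.
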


\begin{proof}
	Since the Fourier transform of the monomial basis is proportional to derivatives of the Dirac delta function located at 	$\omega=0$, its spectral energy is concentrated around the DC component. Therefore, finite-order Taylor expansions mainly characterize ultra-low-frequency behaviors.
\end{proof}

\paragraph{Jacobi base function}
Let $P_n^{(\alpha, \beta)}$ denote the $n$ orders' Jacobi polynomial, then its Fourier transform is
\begin{equation}
	\hat{P}_n^{(\alpha,\beta)}(\omega) = \int_{-1}^1 P_n^{(\alpha,\beta)}(x) e^{j\omega x} dx.\label{eq19}
\end{equation}
By Jacoby's differential formula
\begin{equation}
	\frac{d}{dx} P_n^{(\alpha,\beta)}(x) = \frac{n+\alpha+\beta+1}{2} P_{n-1}^{(\alpha+1,\beta+1)}(x),
\end{equation} 
there is
\begin{small}
\begin{equation}
    \hat{P}_n = \underbrace{\frac{1}{j\omega}\left[ P_n(1)e^{j\omega} - P_n(-1)e^{-j\omega} \right]}_{Boundary-term\ B_n^{(1)}} - \frac{\gamma_n}{2j\omega} \hat{P}_{n-1}^{(\alpha+1,\beta+1)}(\omega),
    \label{eq:f-jacobi-expansion}
\end{equation}
\end{small}
which can be further written as
\begin{equation}
	\hat{P}_n = B_n^{(1)} + O\left( \frac{1}{\lambda} \hat{P}_{n-1} \right).
\end{equation}
According to the recurrence property, $\hat{P}_n^{(\alpha,\beta)}$ can be approximately expressed as
\begin{equation}
	\hat{P}_n^{(\alpha,\beta)}(\omega) = \text{Boundary-term} + O\left( \frac{1}{\omega^{n+1}} \right).
\end{equation}

Since the boundary terms, formed by and $P_n(1)$, $P_n(-1)$ and $e^{\pm j\omega}$, are oscillatory but only of magnitude $O(1/\omega)$, $\hat{P}_n^{(\alpha,\beta)}$ is approximately expressed as
\begin{equation}
	\left| \hat{P}_n^{(\alpha,\beta)} \right| \sim \frac{C}{\left| \omega \right|^{m+1}}.\label{eq24}
\end{equation}

\begin{lemma}
	Jacobi polynomial bases exhibit polynomial spectral decay $	|\hat P_n(\omega)|=	O(\omega^{-n-1})$, and therefore are suitable for representing low-frequency smooth structures.
\end{lemma}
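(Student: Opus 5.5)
The plan is to prove the decay estimate by induction on the degree $n$, using the integration-by-parts recursion \eqref{eq:f-jacobi-expansion} derived just above. I follow the paper's convention of separating $\hat P_n^{(\alpha,\beta)}$ into the oscillatory boundary term $B_n^{(1)}$ and the remainder produced by the differentiation formula. The claimed rate $O(\omega^{-n-1})$ is shown for this remainder, in the form stated in \eqref{eq24}. Throughout, $\alpha,\beta$ are fixed, and the recursion raises them to $\alpha+1,\beta+1$ at each step. Since every constant depends only on finitely many parameter values, the bounds remain uniform.

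\textbf{Base case.} For $n=0$ we have $P_0^{(\alpha,\beta)}\equiv 1$, so $\hat P_0(\omega)=\int_{-1}^1 e^{j\omega x}\,dx = 2\sin\omega/\omega$. Hence $|\hat P_0(\omega)|\le 2/|\omega|=O(\omega^{-1})$, which matches $O(\omega^{-n-1})$ at $n=0$.

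\textbf{Inductive step.} Assume the bound holds at degree $n-1$ for every parameter pair, in particular for $(\alpha+1,\beta+1)$. Integrate \eqref{eq19} by parts, using Jacobi's differentiation formula $\frac{d}{dx}P_n^{(\alpha,\beta)}=\frac{\gamma_n}{2}P_{n-1}^{(\alpha+1,\beta+1)}$ with $\gamma_n=n+\alpha+\beta+1$. This reproduces \eqref{eq:f-jacobi-expansion}. The non-boundary term is $-\frac{\gamma_n}{2j\omega}\hat P_{n-1}^{(\alpha+1,\beta+1)}(\omega)$, and by the inductive hypothesis its modulus is at most $\frac{|\gamma_n|}{2|\omega|}\cdot C_{n-1}|\omega|^{-n}=O(|\omega|^{-n-1})$.

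\textbf{Unrolling the recursion.} Iterating $n$ times gives an expansion whose innermost factor is $\hat P_0^{(\alpha+n,\beta+n)}$ multiplied by $\prod_{k}\gamma_k/(2j\omega)$. The $n$ factors of $1/\omega$ from the recursion, together with the extra $1/\omega$ from $\hat P_0$, give exactly the exponent $n+1$. This recovers the statement $\hat P_n = \text{Boundary-term} + O(\omega^{-n-1})$, and hence \eqref{eq24} for the component that is not attributed to endpoint values.

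\textbf{Main obstacle.} The obstacle is the boundary term $B_n^{(1)}=\frac{1}{j\omega}\bigl[P_n(1)e^{j\omega}-P_n(-1)e^{-j\omega}\bigr]$. It is only $O(1/\omega)$, since $P_n^{(\alpha,\beta)}(\pm1)\neq 0$. Consequently, for $n\ge1$ the full transform of the polynomial truncated to $[-1,1]$ does not decay like $\omega^{-n-1}$, and the lemma holds only with the paper's convention that the boundary contribution is split off. I would make this explicit in the write-up.

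I would also note the cleanest setting in which the lemma holds literally for the whole transform. This is the case where the basis is used with a smooth taper or weight that vanishes to order $n$ at $x=\pm1$, for example the Jacobi weight with $\alpha,\beta$ large enough, or a compactly supported window. Then every boundary term in the iterated integration by parts vanishes, and the $O(\omega^{-n-1})$ bound, or better, follows from the same induction.

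Either way, the conclusion is that the spectrum is dominated by a component decaying polynomially in $|\omega|$. This concentrates spectral energy at low frequencies and justifies the suitability of Jacobi bases for smooth, low-frequency structure.
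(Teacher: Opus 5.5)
Your ``Main obstacle'' paragraph is right, and it is fatal to the statement as written. Since $P_n^{(\alpha,\beta)}(1)=\binom{n+\alpha}{n}\neq0$ (and similarly at $-1$), the transform $\hat P_n(\omega)=\int_{-1}^{1}P_n(x)e^{j\omega x}\,dx$ contains an endpoint term of exact order $1/|\omega|$. So $|\hat P_n(\omega)|=O(\omega^{-n-1})$ fails for every $n\ge1$. The paper's proof (``repeated integration by parts'') follows the same route as yours and has the same hole: its preceding derivation notes that the boundary term is $O(1/\omega)$ and then asserts the power-law bound anyway.

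Your salvage does not hold either. The claim that the non-boundary term $-\frac{\gamma_n}{2j\omega}\hat P_{n-1}^{(\alpha+1,\beta+1)}$ is $O(\omega^{-n-1})$ is false for $n\ge2$, and the induction breaks exactly in the step. There you feed in a bound on the \emph{full} transform $\hat P_{n-1}^{(\alpha+1,\beta+1)}$, but each step only outputs a bound on a remainder. The full $\hat P_{n-1}^{(\alpha+1,\beta+1)}$ carries its own nonzero $1/\omega$ endpoint term, because $P_{n-1}^{(\alpha+1,\beta+1)}(\pm1)\neq0$. So the hypothesis you use is never established beyond degree $0$, and the remainder is in fact of exact order $\omega^{-2}$ for every $n\ge1$ (with $\alpha,\beta>-1$). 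For Legendre with $n=2$,
\[
\hat P_2(\omega)=\frac{2\sin\omega}{\omega}+\frac{6\cos\omega}{\omega^{2}}-\frac{6\sin\omega}{\omega^{3}},
\]
where $B_2^{(1)}=2\sin\omega/\omega$ and the remainder $6\cos\omega/\omega^{2}-6\sin\omega/\omega^{3}$ is not $O(\omega^{-3})$.

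Unrolling the recursion does not rescue this. Iterating to the bottom is legitimate only if ``Boundary-term'' absorbs the endpoint contributions from \emph{every} level, of orders $\omega^{-1},\dots,\omega^{-n}$. The leftover factor $\hat P_0^{(\alpha+n,\beta+n)}=(e^{j\omega}-e^{-j\omega})/(j\omega)$ is itself an endpoint term. Since $P_n^{(n+1)}\equiv0$, the recursion closes to the exact identity
\[
\hat P_n(\omega)=\sum_{k=0}^{n}\frac{(-1)^{k}}{(j\omega)^{k+1}}\Bigl[P_n^{(k)}(x)\,e^{j\omega x}\Bigr]_{x=-1}^{x=1}.
\]
So the ``$O(\omega^{-n-1})$ remainder'' is just the top summand and says nothing about how $\hat P_n$ decays. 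What this identity does give is the sharp bound $|\hat P_n(\omega)|=O(1/|\omega|)$, with no gain in $n$. For Legendre polynomials, $\hat P_n$ is $2j^{n}$ times the spherical Bessel function of order $n$. That function vanishes to order $n$ at $\omega=0$ and is negligible for $|\omega|\ll n$, so raising the degree pushes spectral energy up, not down.

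Your tapered variant is a sound replacement lemma, but it concerns a different function from the unweighted $P_n$ the model evaluates. Its rate $\omega^{-n-1}$ comes from the taper vanishing to order $n$ at $\pm1$, and would hold equally for the taper times a constant. It is not a property of the Jacobi degree. Neither your argument nor the paper's can establish the stated decay, because it is false. The correct statement is the $O(1/|\omega|)$ bound, or a lemma about explicitly tapered bases.
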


\begin{proof}
	Repeated integration by parts shows that $	\hat P_n(\omega)=O(\omega^{-n-1}),$ which implies that high-frequency components decay polynomially with frequency. Therefore, Jacobi polynomials mainly represent smooth low-frequency structures.
\end{proof}

\begin{lemma}
	The proposed hybrid basis set jointly provides multi-resolution frequency coverage from DC to high-frequency bands.	
	Consequently, the hybrid representation possesses greater approximation capability than any individual basis function.
\end{lemma}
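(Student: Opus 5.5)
The plan is to make the lemma precise and then argue by inclusion of hypothesis spaces. Let $\mathcal{H}_{\rm B}$, $\mathcal{H}_{\rm J}$, $\mathcal{H}_{\rm W}$, $\mathcal{H}_{\rm T}$ denote the sets of finite linear combinations of B-spline, Jacobi, Shannon-wavelet and Taylor (monomial) basis functions, each with a fixed budget of terms. Let the hybrid class be the Minkowski sum $\mathcal{H}_{\rm hyb}=\mathcal{H}_{\rm B}+\mathcal{H}_{\rm J}+\mathcal{H}_{\rm W}+\mathcal{H}_{\rm T}$, which is exactly what the basis-function mixture layer outputs when it sums the branches. For a target $x\in L^2(\mathbb{R})$ (restricted to the look-back window), define the best approximation error
\[
E_{\mathcal{H}}(x)=\inf_{g\in\mathcal{H}}\|x-g\|_2 .
\]
By Theorem~\ref{thr1} this equals $\frac{1}{\sqrt{2\pi}}\inf_{g}\|X-G\|_2$. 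All comparisons can therefore be made in the frequency domain.

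The "greater approximation capability" part then follows in two steps. First, setting all but one branch to zero shows $\mathcal{H}_\bullet\subseteq\mathcal{H}_{\rm hyb}$ for each single family. Hence $E_{\mathcal{H}_{\rm hyb}}(x)\le\min_\bullet E_{\mathcal{H}_\bullet}(x)$ for every $x$, which is the non-strict version of the claim. Second, for strictness, I would exhibit a target $x=x_L+x_H$, where $X_L$ is supported in a low band $|\omega|\le\Omega_0$ and $X_H$ in a wavelet band $\pi/\alpha<|\omega|\le 2\pi/\alpha$ with $\pi/\alpha\gg\Omega_0$.
\begin{itemize}
\item The Shannon branch reproduces $x_H$ exactly, since its spectra are disjoint band-pass indicators by \eqref{eq17}. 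It cannot capture $x_L$, because no dilation in the chosen finite set reaches the band around the origin.
\item Conversely, the B-spline, Jacobi and Taylor branches have spectra concentrated near $\omega=0$. Their high-band energy decays like $|\omega|^{-(m+1)}$ or $|\omega|^{-(n+1)}$, or vanishes (Taylor), by the preceding lemmas. So with a bounded coefficient budget they leave a residual in the high band bounded below by roughly $\|X_H\|_2-C\,(\pi/\alpha)^{-(m+1)}>0$.
\end{itemize}
The hybrid achieves error at most $E_{\rm low}(x_L)$ plus zero, which is strictly smaller than either single-family error. The "coverage from DC to high frequencies" statement is assembled from the earlier lemmas: Taylor gives DC, B-spline and Jacobi give the low band, and the union over dilations $\alpha$ of the Shannon bands covers $(0,\infty)$ dyadically.

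The main obstacle is the strictness step. The claim "greater than any individual basis" is false without qualification, because each family is individually dense in $L^2$ on a compact interval when the term budget is unlimited. I would therefore state the result under a fixed number of terms and bounded coefficients, or equivalently fixed parameter count. I would then need a quantitative lower bound on the high-band residual of the smooth families. The approach I would try first is to use the decay estimates \eqref{eq24} and the sinc$^{m+1}$ bound, together with Cauchy–Schwarz over the finite coefficient vector, to bound $\int_{\pi/\alpha}^{2\pi/\alpha}|G(\omega)|^2d\omega$ above. This forces a residual close to $\|X_H\|_2^2$.
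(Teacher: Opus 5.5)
Your route is genuinely different from the paper's. The paper's proof is purely qualitative. It assigns each family a band (Taylor at DC, Jacobi and B-spline at low frequencies, Shannon band-pass) and concludes that the union covers a wider range. It tacitly reads ``wider coverage'' as ``greater approximation capability''; that link is only asserted afterwards, in Design Principle~1, via Theorem~\ref{thr1}. It never defines an approximation error, a budget, or what strictness would mean. Three parts of your plan are correct and say more than the paper does: defining capability as the best-approximation error, using the inclusion $\mathcal{H}_\bullet\subseteq\mathcal{H}_{\rm hyb}$ for the non-strict claim, and observing that the strict claim is false without a budget (every family is dense). One correction there: the inclusion needs a \emph{per-branch} budget, so the hybrid carries roughly four times the terms of a single branch. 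Your ``or equivalently fixed parameter count'' is wrong. Under an equal total count, $\mathcal{H}_\bullet\not\subseteq\mathcal{H}_{\rm hyb}$, and a single family given all the terms can beat the hybrid on a smooth target.

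The strictness step fails as written because the ambient space is inconsistent. On a look-back window, Paley--Wiener rules out any nonzero $x_L$ or $x_H$ whose spectrum lies in $|\omega|\le\Omega_0$ or in the wavelet band. A windowed Shannon wavelet is also no longer band-limited, so neither ``reproduces $x_H$ exactly'' nor ``cannot capture $x_L$'' survives. On $L^2(\mathbb{R})$, where Theorem~\ref{thr1} lives, monomials are not square-integrable (their $\delta^{(k)}$ spectrum is only a distribution). The Taylor class then collapses to $\{0\}$, and ``Taylor gives DC'' is vacuous. Work in $L^2(\mathbb{R})$ with the polynomial families truncated to a fixed interval. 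Take $x_H$ inside the finite wavelet span, e.g.\ $x_H=A\psi_{\alpha,\beta}$, since a general $X_H$ on the band is not a finite sum of translates.

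Two further repairs are needed. First, the Jacobi rate $O(\omega^{-n-1})$ you borrow is wrong: the boundary term survives because $P_n^{(a,b)}(1)=\binom{n+a}{n}\neq0$. Truncated Jacobi polynomials therefore decay only like $1/|\omega|$, and so do truncated monomials, which span the same space at equal degree. No rate is needed, however. Take an orthonormal basis $\phi_1,\dots,\phi_d$ of any one low-frequency span $V_L$ (B-spline, Jacobi or truncated Taylor). Cauchy--Schwarz gives $\|G\mathbf{1}_{\rm band}\|_2^2\le\sum_k\|\hat\phi_k\mathbf{1}_{\rm band}\|_2^2=:2\pi\epsilon^2$ for every unit-norm $g\in V_L$, with no coefficient bound, and $\epsilon\to0$ as the band moves out. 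Second, you never actually compare the two bounds. With $P$ the projection onto $V_L$ one has $\|Px_H\|_2\le\epsilon\|x_H\|_2$, hence $E_{V_L}(x)^2-E_{V_L}(x_L)^2\ge(1-\epsilon^2)\|x_H\|_2^2-2\epsilon\|x_L\|_2\|x_H\|_2>0$ for small $\epsilon$. Combine this with $E_{\mathcal{H}_{\rm hyb}}(x)\le E_{V_L}(x_L)$, and choose $x_L$ non-orthogonal to $V_L$ so the wavelet class, whose error is exactly $\|x_L\|_2$, is also beaten. This closes the argument.

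Alternatively, you can bypass spectral estimates entirely. Each fixed-budget class is a finite-dimensional subspace, and Paley--Wiener makes the band-limited wavelet span meet the compactly supported spans only in $0$. So every $\mathcal{H}_\bullet$ is a proper subspace of $\mathcal{H}_{\rm hyb}$. Any $x\in\mathcal{H}_{\rm hyb}$ outside their finite union then has $E_{\mathcal{H}_{\rm hyb}}(x)=0<E_{\mathcal{H}_\bullet}(x)$ for all four families. Such an $x$ exists, since a real vector space is never a finite union of proper subspaces.
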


\begin{proof}
	Taylor basis mainly captures DC components. Jacobi and B-spline provide low-frequency responses. Wavelet basis provides localized medium- and high-frequency responses. Their combination therefore covers a wider frequency
	range than each individual basis.
\end{proof}

In general, these four basis functions have their own adaptive frequency domain characteristics. They just match the full frequency domain of low, medium, and high, and can adapt to multi-scale changes in the time domain, which is the reason why we choose these four basis functions.

\textbf{Design Principle 1.}
Suppose different basis functions provide complementary frequency responses over different spectral bands.
Then their weighted combination can enlarge the overall frequency coverage of the representation function, thereby improving its capability of approximating multi-scale temporal signals.

This design principle is motivated by Theorem~\ref{thr1}. Since minimizing spectral approximation error is equivalent to minimizing temporal approximation error, enlarging the frequency coverage of the representation function is expected to improve its fitting capability for multi-scale sound-speed variations.

After the data are fed into the second layer, the output can be expressed as:
\begin{equation}
	\boldsymbol{\mathrm{O_{net}}}
	=
	\boldsymbol{w}_{\rm bs}B_{func}(\boldsymbol{S}_{z,i})
	+
	\boldsymbol{w}_{\rm ja}J_{func}(\boldsymbol{S}_{z,i})
	+
	\boldsymbol{w}_{\rm ta}T_{func}(\boldsymbol{S}_{z,i})
	+
	\boldsymbol{w}_{\rm wa}W_{func}(\boldsymbol{S}_{z,i}),
\end{equation}
where $w$ denotes the learnable weight parameter associated with each basis function.

The B‑spline output can be further expressed as:
\begin{small}
\begin{equation}
    B_{func}(\boldsymbol{S}_{\rm z,i}) = \sum_{k}c_{k}B_{k}(\boldsymbol{S}_{\rm z,i}) + \boldsymbol{w}_{\rm bs}SiLU(\boldsymbol{S}_{\rm z,i}),
\end{equation}
\end{small}where $k$ denotes the spline degree, $c_{k}$ denotes the learnable spline coefficient, and $B_{k}$ denotes the B‑spline function, which can be obtained via the Cox-de Boor recursion.
The Cox-de Boor recursion formula is as follows:

\begin{equation}
    B_{i,1}(u) = \begin{cases}1, u_i<u<u_{i+1} \\0,Otherwise\end{cases},
\end{equation}
and
\begin{equation}
\begin{split}
    B_{i,k}(u) & = \frac{{u - u_i}}{{u_{i+k-1}-u_i}} B_{i,k-1}(u)  \\ +  &\frac{{u_{i+k} - u}}{{u_{i+k}-u_{i+1}}} B_{i+1,k-1}(u)
\end{split},
\end{equation}
where $SiLU$ denotes the activation function, and $\boldsymbol{w}_{\rm bs}$ denotes its corresponding weight parameter. The output of $SiLU$ is denoted by
\begin{equation}
    SiLU(\boldsymbol{S}_{d,i}) = \frac{\boldsymbol{S}_{d,i}}{1+e^{-\boldsymbol{S}_{d,i}}}.
\end{equation}

The Jacobi output can be further expressed as:
\begin{equation}
    J_{func}(\boldsymbol{S}_{\rm z,i}) = \sum_{n}j_{n}P_{n}(\boldsymbol{S}_{\rm z,i}) + \boldsymbol{w}_{\rm ja}*SiLU(\boldsymbol{S}_{\rm z,i}),
\end{equation}
where $n$ is the defined polynomial order, $j_n$ is the coefficient of the polynomial that can be learned, and $P_n$ is the Jacobi polynomial. $P_n$ can be obtained from the following recurrence relation:

\begin{equation}
    P_n(x) = \frac{\beta_n-\alpha_nx}{\gamma_n}P_{n-1}(x) - \frac{\delta_n}{\gamma_n}P_{n-2},
\end{equation}
and 
\begin{equation}
    P_0(x) = 1,
\end{equation}
\begin{equation}
    P_1(x) = \frac{1}{2}[(a-b)+(a+b+2)x],
\end{equation}
\begin{equation}
    \alpha_n=2k(k+a+b)(2k+a+b-2),
\end{equation}
\begin{equation}
    \beta_n=(2ka+b-1)(a^2-b^2),
\end{equation}
\begin{equation}
    \gamma_n=(2k+a+b-2)(2k+a+b-1)(2k+a+b),
\end{equation}
\begin{equation}
    \delta_n=2(k+a-1)(k+b-1)(2k+a+b),
\end{equation}
\begin{equation}
    k=n-1,
\end{equation}
where $a$ and $b$ are the defined polynomial parameters.

The Taylor output can be further expressed as:
\begin{equation}
    T_{func}(\boldsymbol{S}_{\rm z,i}) = \sum^{K-1}_{k=0}T_{k}*\boldsymbol{S}_{\rm z,i}^{k} + \boldsymbol{w}_{\rm ta}SiLU(\boldsymbol{S}_{\rm z,i}),
\end{equation}
where $k$ denotes the defined order, and $T_k$ denotes the Taylor coefficients that can be learned.

The Wavelet output can be further expressed as:
\begin{equation}
    W_{func}(\boldsymbol{S}_{\rm z,i}) = C*\psi(\boldsymbol{S}_{\rm z,i}) + \boldsymbol{w}_{\rm wa}SiLU(\boldsymbol{S}_{\rm z,i}),
\end{equation}
where $C$ is the wavelet coefficient that can be learned, and $\psi$ is the defined wavelet kernel function.

The wavelet kernel function used in this paper is Shannon kernel. It can be expressed separately as follows:
\begin{equation}
    \psi_{\text{sha}}(x) \approx \operatorname{sinc}\Big(\frac{x}{\pi}\Big)w_{\text{ham}}(x),
\end{equation}
where $w_{\text{ham}}$ donates the Hamming window.

\subsubsection{Fully connected layer}
The third layer is a fully connected layer that restores the data from a high-dimensional space to a lower-dimensional space.

\subsubsection{Pruning method}
\begin{algorithm}[!h]
	\caption{Pruning strategy}
	\label{alg:AOS}
	\textbf{Input:} Sequence of normalized sound speed values
	$\boldsymbol{S}_{d,i} = [s_{z,1}, s_{z,2}, \dots, s_{z,t}]$
	
	\textbf{Output:} Predicted sound speed value $\widehat{s}_{z,t+1}$
	
	\textbf{Step 1:} Freeze all basis-function weights; pretrain only the basis-function parameters.
	
	\textbf{Step 2:} Apply soft pruning with an annealing schedule to progressively increase the weight coefficients of dominant branches and decrease those of minor branches.
	
	\textbf{Step 3:} Remove unimportant branches and redistribute the weights.
	
	\textbf{Step 4:} Repeat Steps 2--3 for a fixed number of iterations or until a stopping criterion is met; output the prediction.
\end{algorithm}
To control the scale of model parameters for deployment on underwater platforms, the basis-function mixture layer employs a pruning method based on a temperature-annealing strategy. Algorithm~\ref{alg:AOS} gives the pseudocode. 

Firstly, multiple rounds of pretraining are performed while freezing the learnable weight coefficients of all basis function branches. This step aims to emphasize the sensitivity of different basis functions to the current training data, in order to facilitate the subsequent adjustment of weight coefficients. Next, a simulated-annealing-guided soft-pruning process is initiated: in each round, we compute the loss of every branch against the ground-truth labels and convert these losses into branch-importance scores using an exponential moving average, where a smaller loss corresponds to greater importance. This step quantifies the sensitivity of different basis functions to the current training data. The importance scores are then used to dynamically adjust the branch coefficients. During soft pruning, a preset threshold is used to control the start of pruning. If the weight coefficient of a branch remains below the preset threshold for several consecutive rounds, the weight coefficient of this low-importance branch is set to zero, since low importance indicates that the branch is insensitive to the current training data and makes little contribution. The weight coefficients are then reallocated among the remaining branches. Soft pruning continues for a predetermined number of rounds. After soft pruning, if a clear best branch has not been selected, manual hard pruning can be performed that we retain only the branch with the largest coefficient and remove all other branches, aiming to maximize the retention of model prediction accuracy while reducing the number of model parameters. The annealing-based pruning strategy mitigates erroneous pruning caused by training noise and gradually selects the optimal branches.

\section{Experimental Results and Discussion}
\subsection{Experimental Protocol}
\subsubsection{Datasets and Multi-Scale Evaluation Scenarios}
The three datasets were selected to construct forecasting scenarios with distinct temporal resolutions and dynamical characteristics, as summarized in Table~\ref{tab:dataset_summary}. Rather than treating them as independent benchmarks, we use them jointly to evaluate whether the proposed model can maintain stable forecasting performance across multiple temporal scales.

\begin{table}[!htbp]
	\centering
	\caption{Summary of the multi-scale SSP forecasting scenarios.}
	\label{tab:dataset_summary}
	\resizebox{\columnwidth}{!}{
		\begin{tabular}{lccc}
			\hline
			Dataset & Sampling interval & Main variation characteristic
			& Evaluation objective \\
			\hline
			Argo & One month
			& Seasonal and slowly varying evolution
			& Long-term trend forecasting \\
			SW06 & 30 s
			& Internal-wave-induced rapid fluctuations
			& Fast-varying robustness \\
			Ocean Experiment & Approximately 2 h
			& Real observations with limited samples
			& Few-shot field-data forecasting \\
			\hline
	\end{tabular}}
\end{table}

The Argo dataset consists of monthly averaged SSPs and mainly reflects slowly varying seasonal and inter-month variations. The SW06 dataset has a sampling interval of 30 seconds and contains rapid SSP fluctuations induced by
internal-wave activities. The ocean-experiment dataset was collected at an average interval of approximately 2 hours and represents a practical few-shot forecasting scenario with limited observations. Therefore, the three datasets respectively characterize slow-scale, fast-scale, and intermediate-scale SSP evolution.

\subsubsection{Baselines and Implementation Details}
To evaluate the forecasting performance of LHBF-KAN, we selected bi-directional long short-term memory (Bi-LSTM) \cite{Yin2026BiLstm}, patch time series Transformer (PatchTST) \cite{Yuqietal-2023-PatchTST}, and Transformer \cite{Vaswani2017Trans} as benchmark models for comparison.

Bi-LSTM model is a bidirectional Long Short-Term Memory network that captures both forward and backward dependencies in a sequence simultaneously through its forward and backward recurrent structures. It makes fuller use of contextual information in time series and is widely used in time series forecasting tasks.

PatchTST refers to a Transformer variant designed for multivariate time series forecasting. It divides the time series into patches and models each variable channel independently, effectively capturing local semantics while improving the efficiency and accuracy of long-sequence forecasting.

Transformer is a popular sequence modeling architecture entirely based on the self-attention mechanism. It can capture long-range dependencies between any positions in a sequence in parallel, demonstrating powerful modeling capabilities in the field of time series forecasting.

The experiments were implemented on the NVIDIA RTX 5090 GPU. The main parameter settings of LHBF-KAN are shown in Table~\ref{ref-table-1}. The gradients were computed by back propagation, and the parameters were updated using the Adam optimizer.
\begin{table}[!htbp]
	\caption{Parameter Setting of LHBF-KAN \label{ref-table-1}}
	\centering
	\begin{tabular}{ccc}
		\hline
		Basics & Parameters & Value \\
		\hline
		1 & Training epochs & 100 \\
		2 & Backtracking window & 12 \\
		3 & Hidden layer & 1 \\
		4 & Hidden layer neurons & 256 \\
		5 & Learning rate & 0.0001 \\
		\hline
		B-spline & Parameters & Value \\
		\hline
		1 & Grid size & 5 \\
		2 & Spline order & 2 \\
		\hline
		Jacoby & Parameters & Value \\
		\hline
		1 & a & 1 \\
		2 & b & 1 \\
		3 & Degree & 2 \\
		\hline
		Taylor & Parameters & Value \\
		\hline
		1 & Order & 2 \\
		\hline
		Wavelet & Parameters & Value \\
		\hline
		1 & Wavelet type & Shannon \\
		\hline
	\end{tabular}
\end{table}

To evaluate the performance of the proposed model, the metric used in this paper are normally the RMSE, MAE and $R^2$ as mentioned in Section \uppercase\expandafter{\romannumeral3}.

\subsection{Prediction Accuracy Across Temporal Scales}
\begin{table*}[!htbp]
	\centering
	\caption{Overall forecasting performance across the three multi-scale
		datasets. The best result is shown in bold and the second-best result is
		underlined.}
	\label{tab:overall_results}
	\begin{tabular}{llccc}
		\hline
		Dataset & Model & RMSE (m/s) & MAE (m/s) & $R^2$ \\
		\hline
		\multirow{4}{*}{Argo}
		& LHBF-KAN   & \textbf{0.728} & 0.516 & 0.9989 \\
		& Bi-LSTM    & 1.163 & 0.790 & 0.9974 \\
		& Transformer& 0.843 & 0.571 & 0.9986 \\
		& PatchTST   & 0.763 & 0.535 & 0.9988 \\
		\hline
		\multirow{4}{*}{SW06}
		& LHBF-KAN   & \textbf{0.397} & 0.199 & 0.9990 \\
		& Bi-LSTM    & 0.953 & 0.511 & 0.9944 \\
		& Transformer& 0.464 & 0.245 & 0.9983 \\
		& PatchTST   & 0.525 & 0.231 & 0.9986 \\
		\hline
		\multirow{4}{*}{Ocean Experiment}
		& LHBF-KAN   & \textbf{0.184} & 0.112 & 0.9997 \\
		& Bi-LSTM    & 0.304 & 0.170 & 0.9994 \\
		& Transformer& 0.324 & 0.123 & 0.9997 \\
		& PatchTST   & 0.216 & 0.182 & 0.9993 \\
		\hline
	\end{tabular}
\end{table*}
Table~\ref{tab:overall_results} compares the forecasting performance across the three temporal scales. LHBF-KAN achieves the lowest overall RMSE on all three datasets, indicating that its advantage is not restricted to a particular sampling interval or variation pattern.

On the monthly sampled Argo dataset, LHBF-KAN obtains an average RMSE of $0.728$~m/s, reducing the error by approximately $37.4\%$, $13.6\%$, and $4.6\%$ relative to Bi-LSTM, Transformer, and PatchTST, respectively. This result indicates that the proposed model can represent slowly varying seasonal SSP evolution.

The advantage becomes more pronounced on the rapidly sampled SW06 dataset. LHBF-KAN reduces the average RMSE from $0.953$, $0.464$, and $0.525$~m/s to $0.397$~m/s compared with Bi-LSTM, Transformer, and PatchTST, respectively. This corresponds to relative reductions of $58.3\%$, $14.4\%$, and $24.4\%$. The larger improvement on SW06 suggests that the
hybrid representation is particularly beneficial when the SSP contains rapid internal-wave-induced variations.

On the ocean-experiment dataset, LHBF-KAN achieves an average RMSE of $0.184$~m/s using only 12 profiles for training. It outperforms all baselines on both forecasted SSPs, demonstrating that the model retains strong performance in a limited-data field scenario.

Although LHBF-KAN does not attain the lowest error in every individual depth interval or forecasting instance, it achieves the best aggregate performance across the complete test sets. This indicates that the proposed model provides a more balanced trade-off across depths and temporal conditions, rather than being optimized for a specific region or instance.

\subsection{Robustness to Depth-Dependent and Abrupt Variations}
\begin{figure}[!htbp]
	\centering
	\subfloat[]{\includegraphics[width=0.45\linewidth]{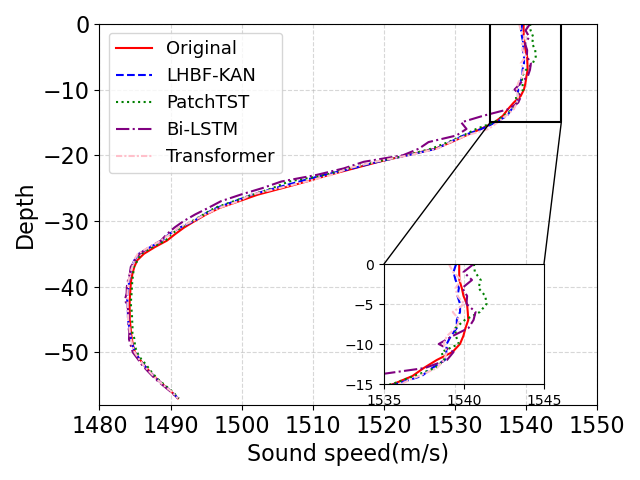}}%
	\label{fig4a}
	\centering
	\subfloat[]{\includegraphics[width=0.45\linewidth]{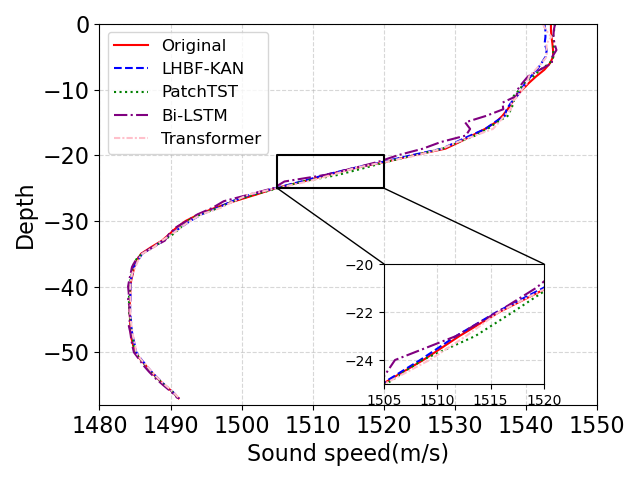}}%
	\label{fig4b}
	\caption{Comparison figure between actual and predicted (Argo dataset). (a) May. (b) August.}
	\label{fig4}
\end{figure}
\begin{figure}[!htbp]
	\centering
	\subfloat[]{\includegraphics[width=0.45\linewidth]{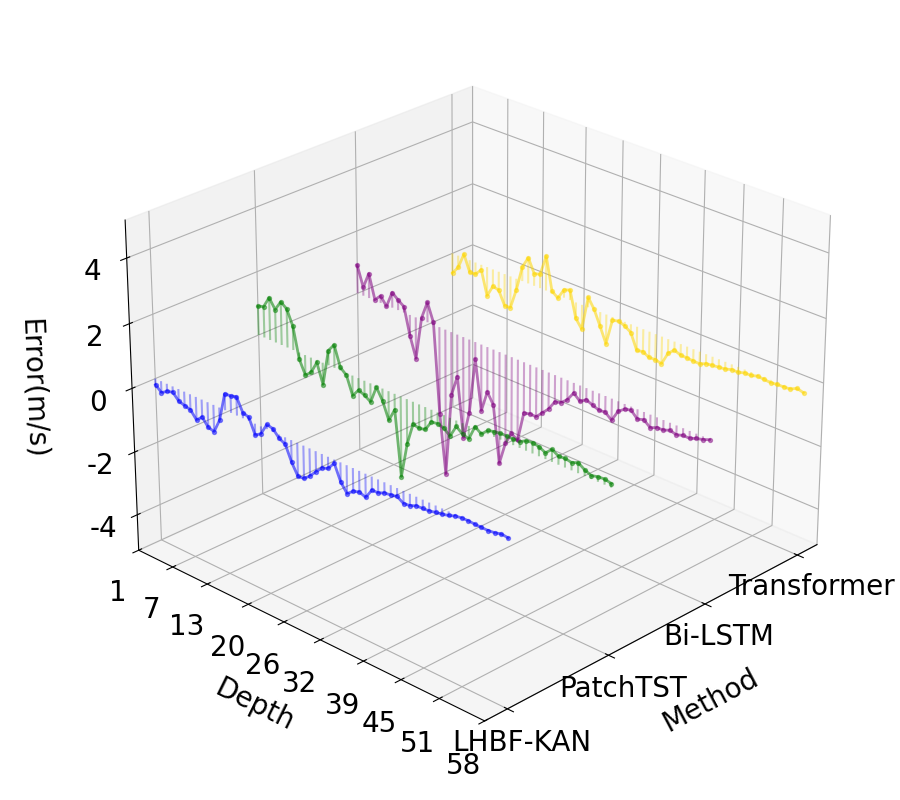}}%
	\label{fig5a}
	\centering
	\subfloat[]{\includegraphics[width=0.45\linewidth]{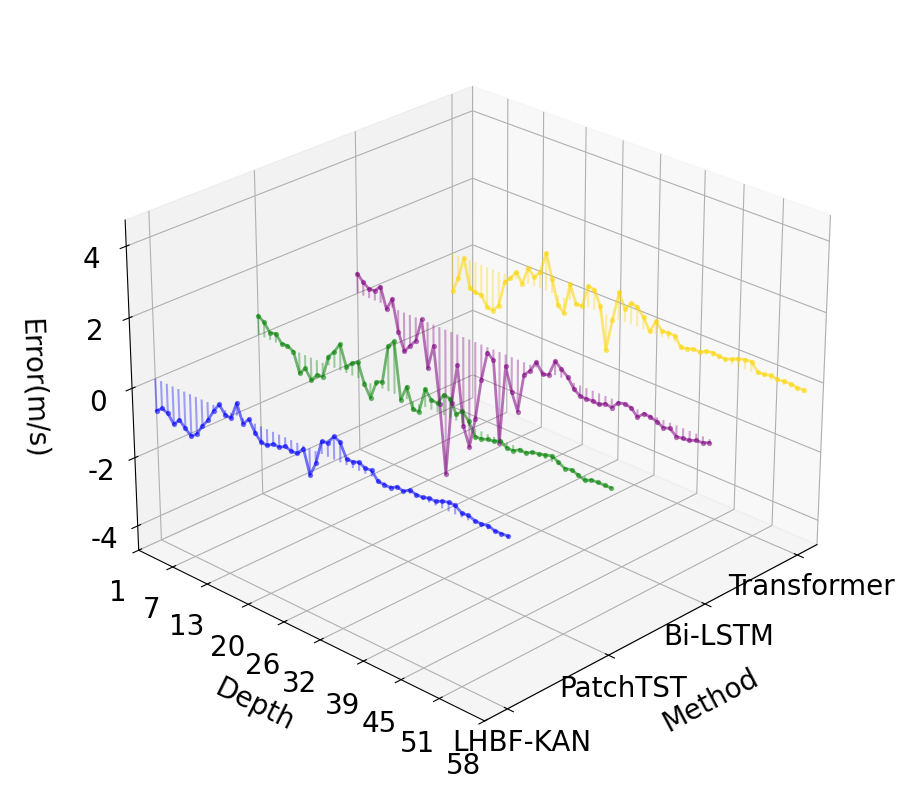}}%
	\label{fig5b}
	\caption{Error distributions (Argo dataset). (a) May. (b) August.}
	\label{fig5}
\end{figure}
\begin{figure}[!htbp]
	\centering
	\subfloat[]{\includegraphics[width=0.45\linewidth]{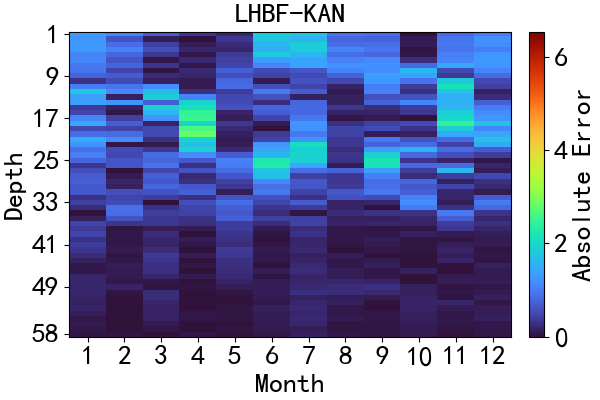}}
	\label{fig06a}
	\subfloat[]{\includegraphics[width=0.45\linewidth]{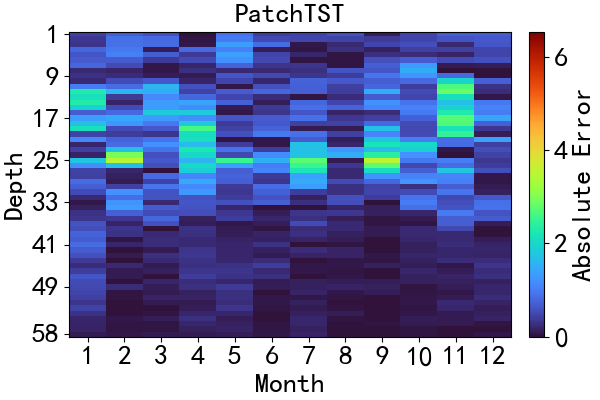}}%
	\label{fig06b}\\
	\subfloat[]{\includegraphics[width=0.45\linewidth]{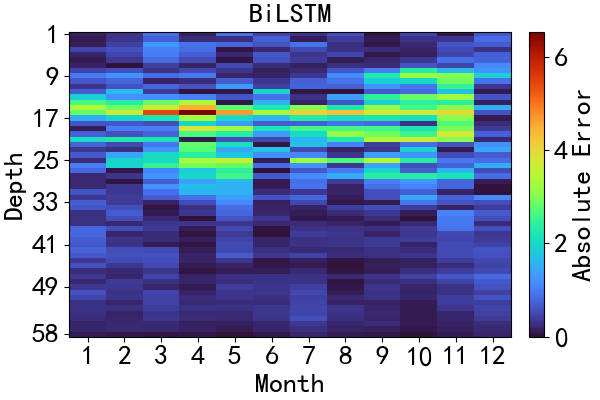}}%
	\label{fig06c}
	\subfloat[]{\includegraphics[width=0.45\linewidth]{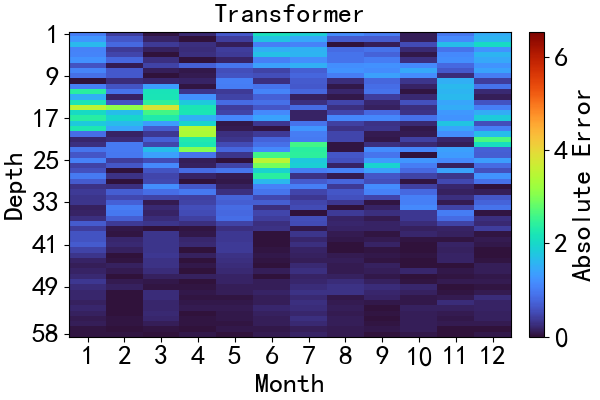}}%
	\label{fig06d}
	\caption{Error heatmap (Argo dataset).}
	\label{fig6}
\end{figure}
\subsubsection{Depth-Wise Error Characteristics}
Figures~\ref{fig4}--\ref{fig6} show that the forecasting errors are not uniformly distributed over depth. On the Argo
dataset, relatively large errors are mainly concentrated in the upper 33 depth layers, where the SSP exhibits stronger temporal variability. In contrast, the deeper layers are more stable and can be predicted accurately by most models.

Compared with the baselines, LHBF-KAN produces fewer large-error regions in the shallow and transition layers. Although isolated depth intervals may favor Transformer or PatchTST, the proposed model maintains a more uniform error distribution over the complete water column. This explains why LHBF-KAN achieves the lowest aggregate RMSE even when it is not the best model in every individual depth interval.
\subsubsection{Robustness to Internal-Wave-Induced Variations}

\begin{figure}[!htbp]
	\centering
	\includegraphics[width=0.5\linewidth]{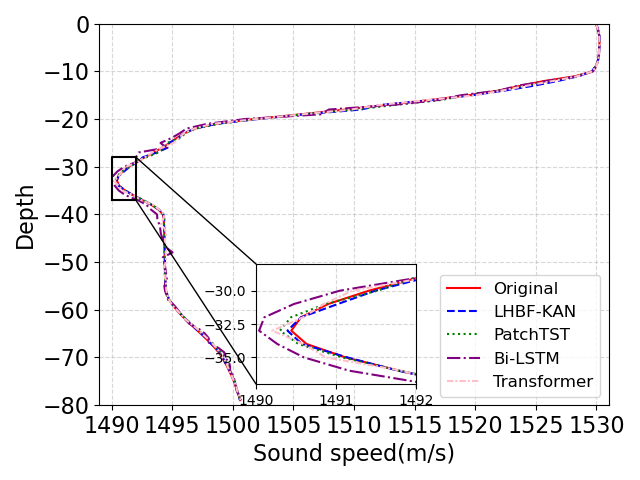}
	\caption{Comparison figure between actual and predicted (SW06 dataset).}
	\label{fig7}
\end{figure}

\begin{figure}[!htbp]
	\centering
	\includegraphics[width=0.5\linewidth]{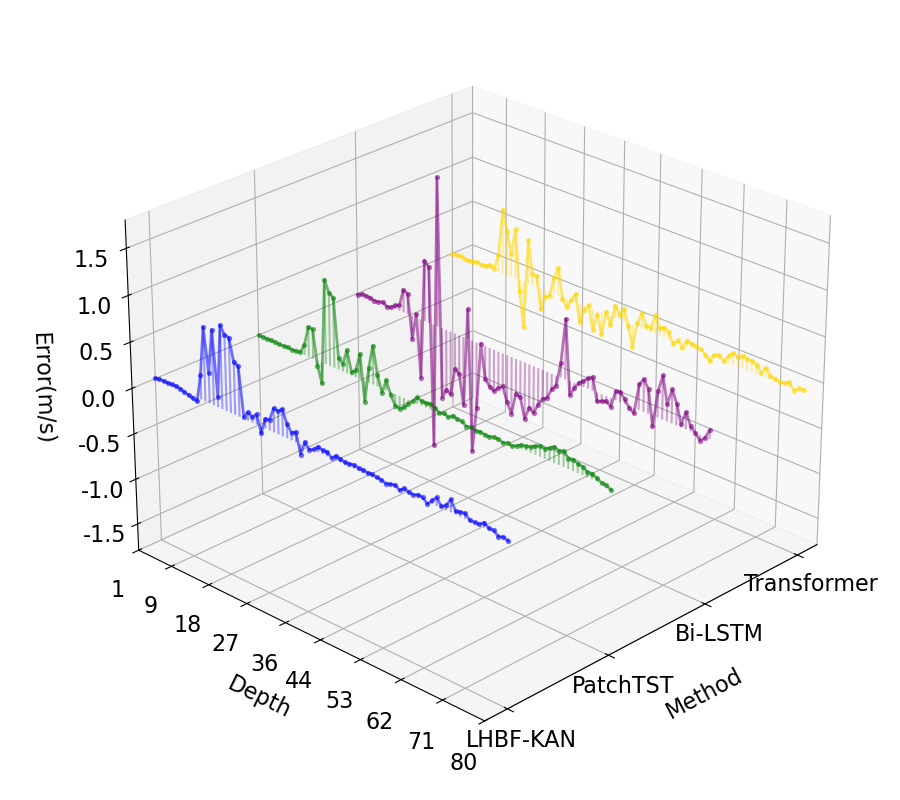}
	\caption{Error distributions (SW06 dataset).}
	\label{fig8}
\end{figure}

\begin{figure}[!htbp]
	\centering
	\subfloat[]{\includegraphics[width=0.45\linewidth]{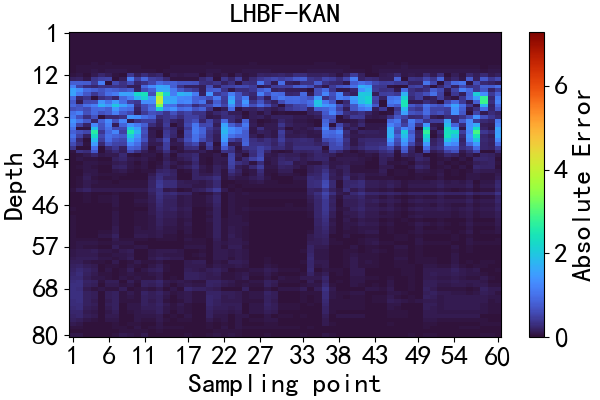}}%
	\label{fig09a}
	\subfloat[]{\includegraphics[width=0.45\linewidth]{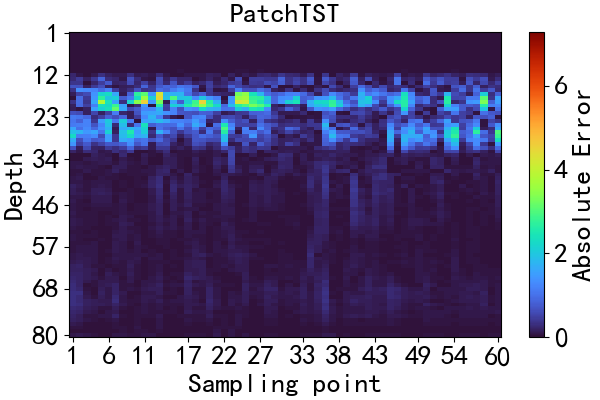}}%
	\label{fig09b}\\
	\subfloat[]{\includegraphics[width=0.45\linewidth]{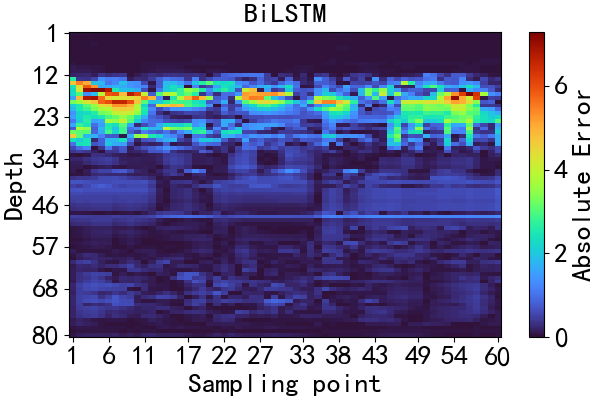}}%
	\label{fig09c}
	\subfloat[]{\includegraphics[width=0.45\linewidth]{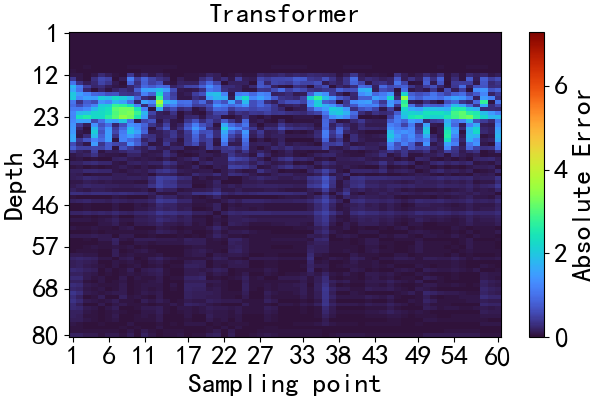}}%
	\label{fig09d}
	\caption{Error heatmap (SW06 dataset).}
	\label{fig9}
\end{figure}

The SW06 dataset provides a more challenging scenario because the 30-s sampling interval preserves rapid fluctuations associated with internal waves. Fig.~\ref{fig7} displays the predicted SSPs from all four models on SW06. In the zoomed-in region, the sound speed changes abruptly with depth. Although all four models capture this trend, LHBF-KAN’s predictions remain the closest to the original SSP. The corresponding error distribution is provided in Fig.~\ref{fig8}, at the depth layers highlighted in Fig.~\ref{fig7}, LHBF-KAN exhibits smaller errors and less variability, indicating superior robustness.

Fig.~\ref{fig9} presents an absolute-error heatmap of all predicted versus actual sound speed values on SW06. As illustrated, Bi-LSTM again performs the worst. In terms of extreme error distribution, Transformer is slightly better than PatchTST but inferior to LHBF-KAN, while LHBF-KAN surpasses the other three models in both the magnitude and distribution shape of extreme errors, demonstrating its overall performance advantage.

Compared with the Argo dataset, LHBF-KAN shows a more pronounced performance lead over the other three models on the SW06 dataset. This can be explained by the shorter sampling interval of SW06, which causes more drastic data variations, yet LHBF-KAN still delivers better performance, underscoring its superior robustness.

To further evaluate the predictive performance of LHBF-KAN, we conducted field experiments in the South China Sea in 2023. The experiment site was located at 17.3°N, 116.2°E (Fig.~\ref{fig1}). The data acquisition system consisted of a vessel equipped with a CTD and several XCTD probes. During the campaign, 14 sound speed profiles (SSPs) were collected at an average sampling interval of 2 hours, of which five were obtained with the CTD and nine with the XCTD probes. The collected data are visualized in Fig.~\ref{fig10}. Of the 14 profiles, the first 12 were used for training and the remaining two for prediction.

\subsubsection{Few-Shot Performance in the Ocean Experiment}
\begin{figure}[!htbp]
	\centering
	\includegraphics[width=0.5\linewidth]{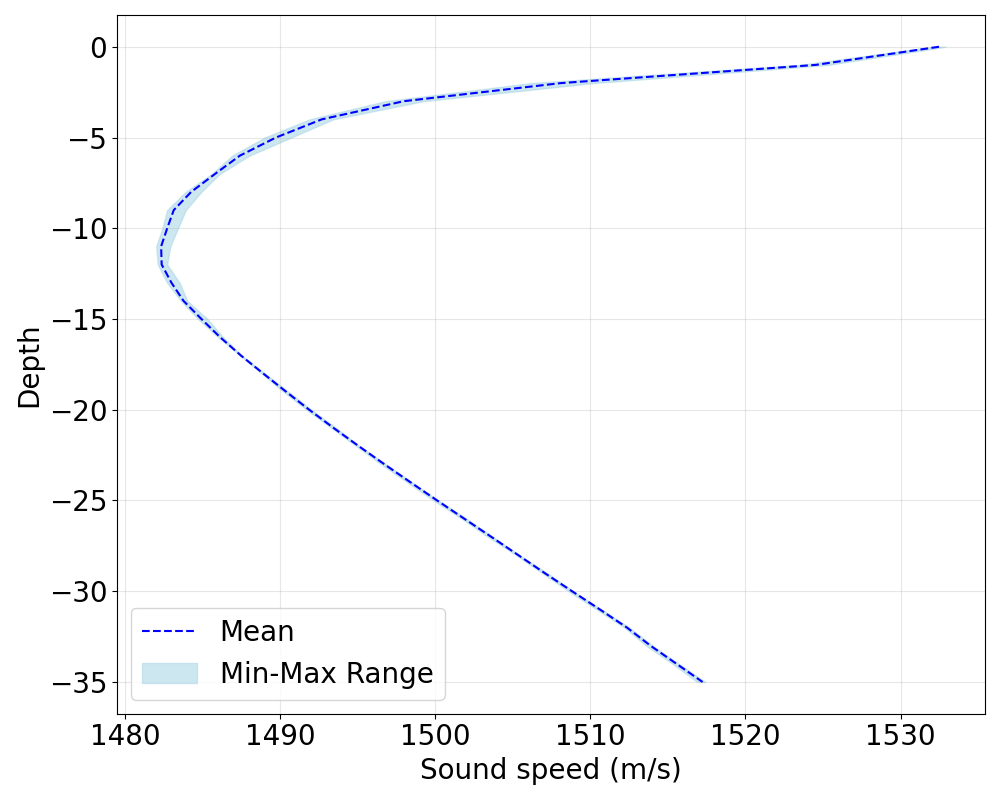}
	\caption{Collected data (Ocean Experiment).}
	\label{fig10}
\end{figure}
The ocean-experiment dataset contains only 14 observed SSPs as shown in Fig.~\ref{fig10}, with the first 12 used for training and the remaining two for testing. Consequently, this experiment evaluates not only forecasting accuracy but also data efficiency.

LHBF-KAN achieves the lowest RMSE for both test profiles, with errors of $0.200$ and $0.168$~m/s. The corresponding average RMSE is $0.184$~m/s, which is $39.5\%$, $43.2\%$, and $14.8\%$ lower than those of Bi-LSTM, Transformer, and PatchTST, respectively. The error curves further show that LHBF-KAN has smaller extrema in the high-variability region between approximately 5 and 15~m. These results demonstrate that the proposed model remains effective when only a small number of field observations are available.

\begin{figure}[!htbp]
	\centering
	\subfloat[]{\includegraphics[width=0.45\linewidth]{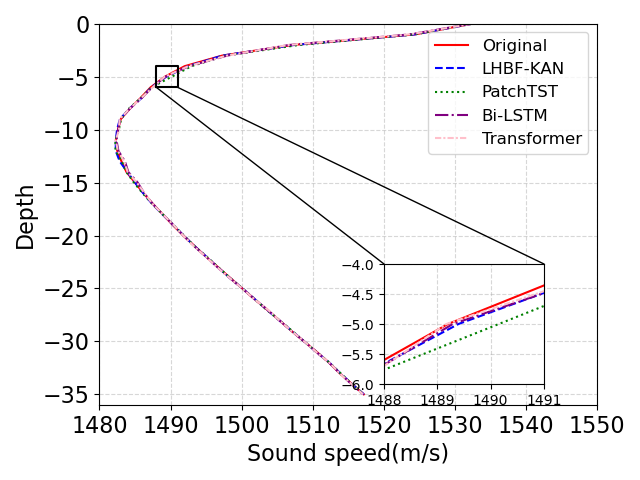}}%
	\label{fig11a}
	\subfloat[]{\includegraphics[width=0.45\linewidth]{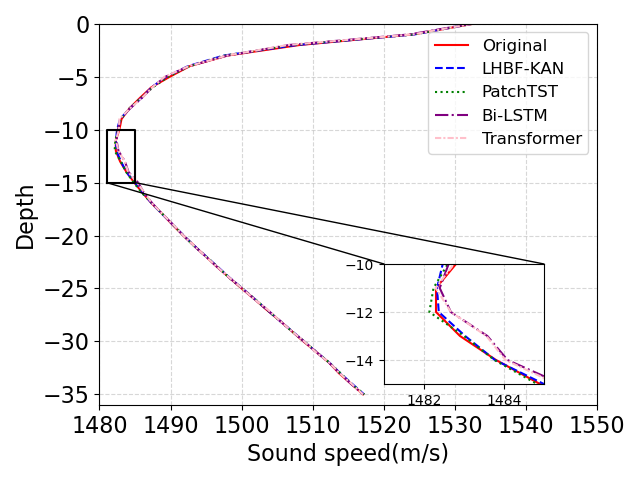}}%
	\label{fig11b}
	\caption{Comparison figure between actual and predicted (Ocean Experiment). (a) SSP1. (b) SSP2.}
	\label{fig11}
\end{figure}

Fig.~\ref{fig11} provides a direct comparison between the predicted and actual values, with the inset highlighting the region of greatest variability in this dataset. In panel (a), although all four models capture the general trend of sound speed variation, none fit the data particularly well. Panel (b) reveals a clear distinction: LHBF-KAN fits the data better than the other three models and follows the original data curve more closely. The corresponding error distributions are shown in Fig.~\ref{fig12}. All four models exhibit relatively large error fluctuations between depth layers 5 and 15, consistent with the curve behavior in Fig.~\ref{fig10}. However, for both predicted SSPs, the extrema and variability of the error distributions produced by LHBF-KAN are the smallest among the four models, indicating that LHBF-KAN delivers both better and more stable performance than its counterparts.

\begin{figure}[!htbp]
	\centering
	\subfloat[]{\includegraphics[width=0.45\linewidth]{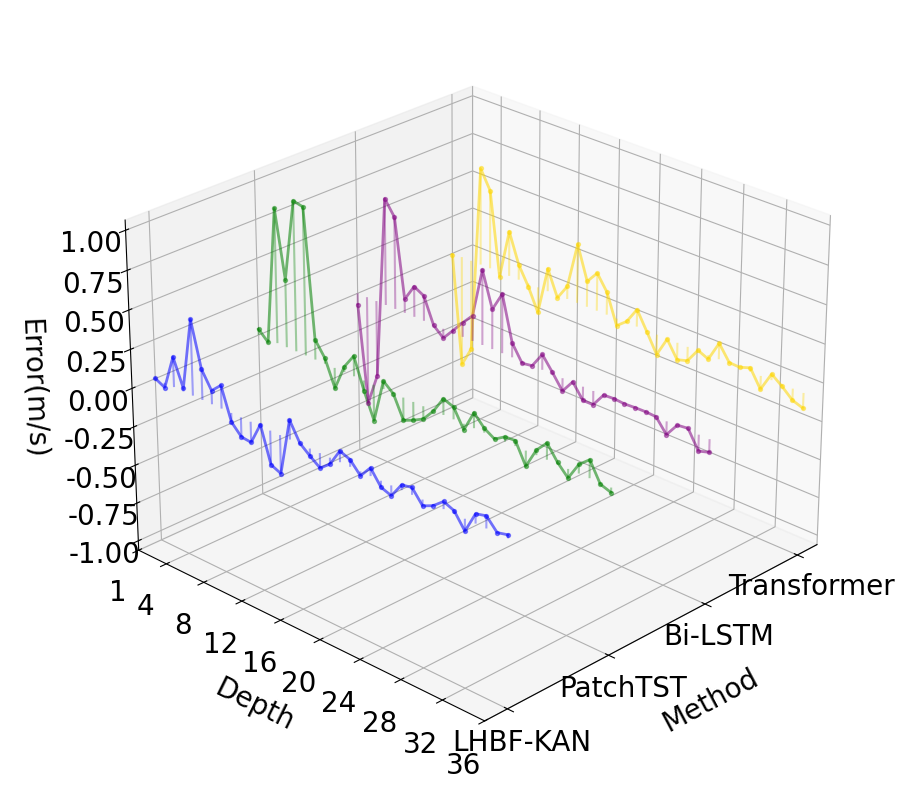}}%
	\label{fig12a}
	\centering
	\subfloat[]{\includegraphics[width=0.45\linewidth]{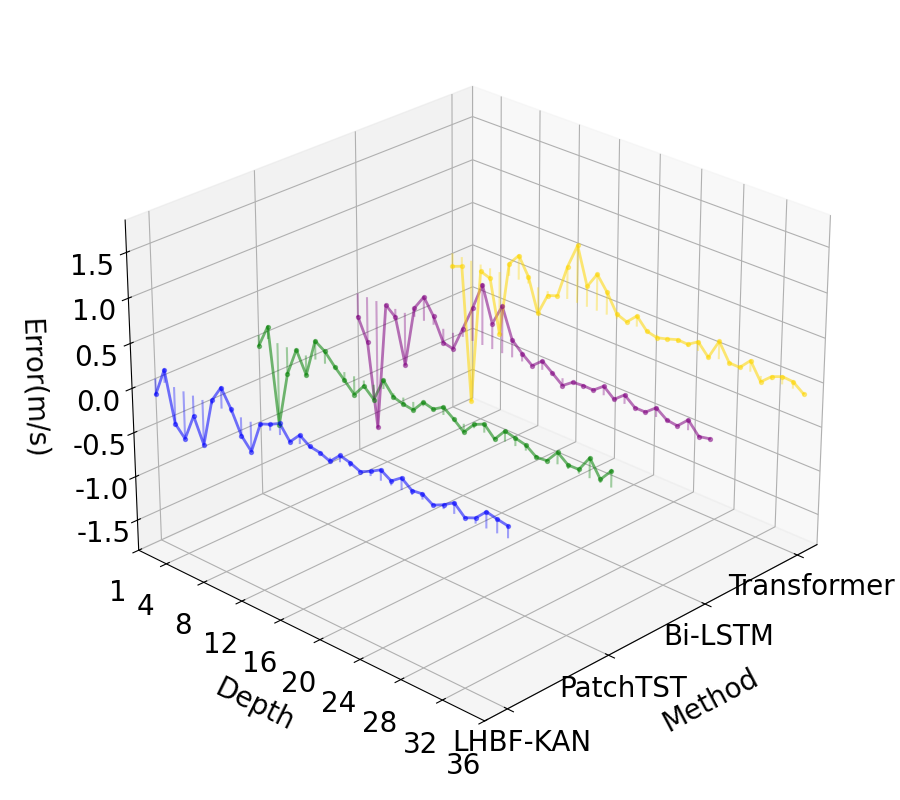}}%
	\label{fig12b}
	\caption{Error distributions (Ocean Experiment). (a) SSP1. (b) SSP2.}
	\label{fig12}
\end{figure}

\subsection{Mechanism Analysis of the Hybrid Basis Representation}
\subsubsection{Single-Basis Ablation}
The purpose of the ablation study is not to demonstrate that the hybrid model always outperforms every individual basis on every dataset. Instead, it evaluates whether a single basis function can remain reliable when the dominant temporal scale changes across datasets.

Tables~\ref{tab3}--\ref{tab5} show that no single basis function performs consistently well in all scenarios. Jacobi obtains the lowest average RMSE on Argo, Taylor performs best on SW06, whereas B-spline becomes competitive on the ocean-experiment dataset. However, the same basis functions can exhibit substantial degradation when applied to a dataset with different spectral characteristics.

\begin{table*}[t]
	\centering
	\caption{Ablation study of different basis functions on the Argo dataset. where the best result is shown in bold.}
	\label{tab3}
	\renewcommand{\arraystretch}{1}
	\begin{tabular}{c|cccccc}
			\hline
			Month	& \multicolumn{6}{c}{RMSE (m/s)}\\
			\cline{2-7}
			& B-spline 	& Jacobi & 	Taylor 	& Wavelet &	LHBF-KAN&	Best Model\\
			\hline
			1&1.629&\textbf{0.631}&0.788&1.416&0.743&Jacobi	\\
			2&1.306&\textbf{0.470}&0.839&1.099&0.544&Jacobi\\
			3&0.828&0.740&0.698&0.707&\textbf{0.648}&LHBF-KAN\\
			4&\textbf{0.690}&1.067&0.974&0.799&0.984&B-spline\\
			5&\textbf{0.404}&0.503&0.445&0.548&0.468&B-spline\\
			6&0.815&1.048&0.874&\textbf{0.775}&0.924&Wavelet\\
			7&\textbf{0.843}&0.865&0.868&1.073&0.933&B-spline\\
			8&0.469&\textbf{0.400}&0.455&0.638&0.452&Jacobi\\
			9&0.774&\textbf{0.715}&0.798&0.978&0.788&Jacobi\\
			10&0.591&\textbf{0.522}&0.589&0.745&0.565&Jacobi\\
			11&1.046&\textbf{0.820}&0.992&1.183&0.951&Jacobi\\
			12&0.794&0.786&0.757&0.753&\textbf{0.743}&LHBF-KAN\\
			\hline
			Average	&0.849&\textbf{0.714}&0.726&0.893&0.728&Jacobi	\\
			\hline
		\end{tabular}
\end{table*}
\begin{table}[!htbp]
	\caption{RMSE OF SSP Prediction by Different Basis Function(SW06 dataset) \label{tab4}}
	\centering
	\begin{tabular}{cccccc}
		\hline
		\multicolumn{4}{c}{RMSE ($m/s$)} & MAE ($m/s$) &  $R^2$\\
		\cline{1-4}
		Model & Max & Min & Average &&  \\
		\hline
		LHBF-KAN & 0.880 & 0.159 & 0.397  & 0.199 & 0.9990  \\
		\hline
		B-spline & 0.936 & 0.153 & 0.472  & 0.229 & 0.9986 \\
		\hline
		Jacobi & \textbf{0.802} & 0.134 & 0.412 & 0.207 & 0.9989 \\
		\hline
		Taylor & 0.879 & \textbf{0.126} & \textbf{0.381} & \textbf{0.190} & 0.9990  \\
		\hline
		Wavelet & 0.853 & 0.139 & 0.395 & 0.204 & 0.9990  \\
		\hline
	\end{tabular}
\end{table}

\begin{table}[!htbp]
	\caption{RMSE OF SSP Prediction by Different Basis Function(Ocean Experiment) \label{tab5}}
	\centering
	\begin{tabular}{cccccc}
		\hline
		\multicolumn{4}{c}{RMSE ($m/s$)} &  MAE ($m/s$) &  $R^2$ \\
		\cline{1-4}
		Model & SSP1 & SSP2 & Average &&\\
		\hline
		LHBF-KAN & \textbf{0.200} & 0.168 & \textbf{0.184} & \textbf{0.112} & 0.9997   \\
		\hline
		B-spline & 0.264 & \textbf{0.109} & 0.186 & 0.115 & 0.9997  \\
		\hline
		Jacobi & 0.278 & 0.217 & 0.247 &0.145 & 0.9996 \\
		\hline
		Taylor & 0.308 & 0.177 & 0.242 & 0.147 & 0.9996 \\
		\hline
		Wavelet & 0.260 & 0.142 & 0.201 &0.120 & 0.9997  \\
		\hline
	\end{tabular}
\end{table}

For example, Jacobi achieves an average RMSE of $0.714$~m/s on Argo but does not remain optimal on SW06 or the ocean-experiment dataset. Taylor achieves the lowest RMSE of $0.381$~m/s on SW06, whereas B-spline performs
poorly on both Argo and SW06 but becomes the strongest individual basis on the ocean-experiment dataset. These results indicate that selecting a single basis function requires prior knowledge of the dominant variation scale, which is generally unavailable in practical deployments.

In contrast, LHBF-KAN remains either the best or close to the best single-basis model on all three datasets. Its RMSE differs from the dataset-specific best basis by only $0.014$~m/s on Argo and $0.016$~m/s on SW06, while it achieves the best result on the ocean-experiment dataset. Therefore, the main advantage of the hybrid architecture lies in its
cross-dataset stability: it approaches the performance of an oracle that selects the most suitable basis separately for each dataset, without requiring such prior selection.

\begin{figure}[!htbp]
	\centering
	\subfloat[]{\includegraphics[width=0.45\linewidth]{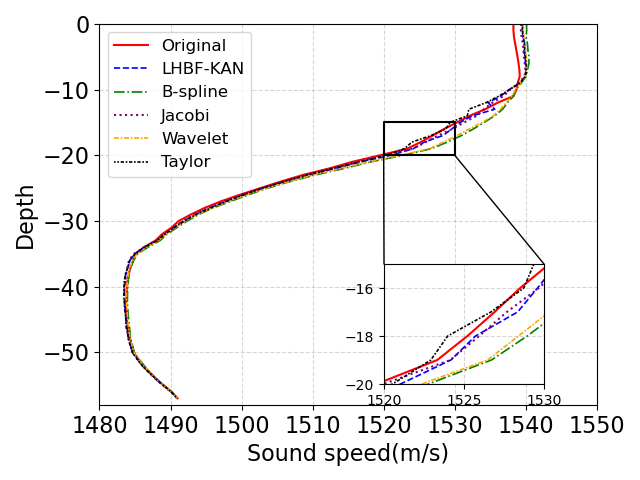}}%
	\label{fig13a}
	\subfloat[]{\includegraphics[width=0.45\linewidth]{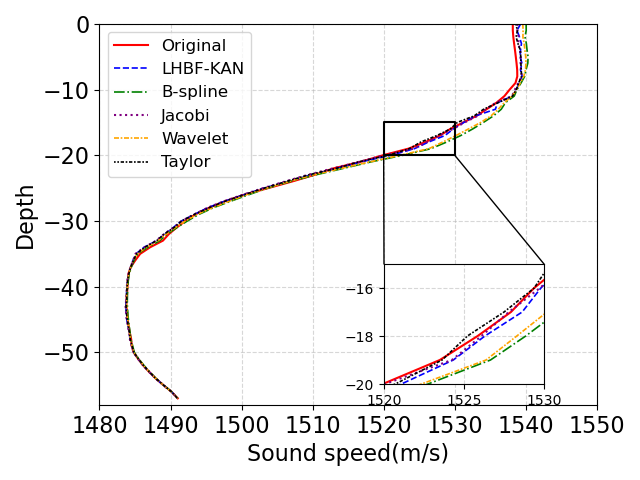}}%
	\label{fig13b}
	\caption{Comparison figure between actual and predicted of different basis functions (Argo dataset). (a) January. (b) February.}
	\label{fig13}
\end{figure}
\begin{figure}[!htbp]
	\centering
	\subfloat[]{\includegraphics[width=0.45\linewidth]{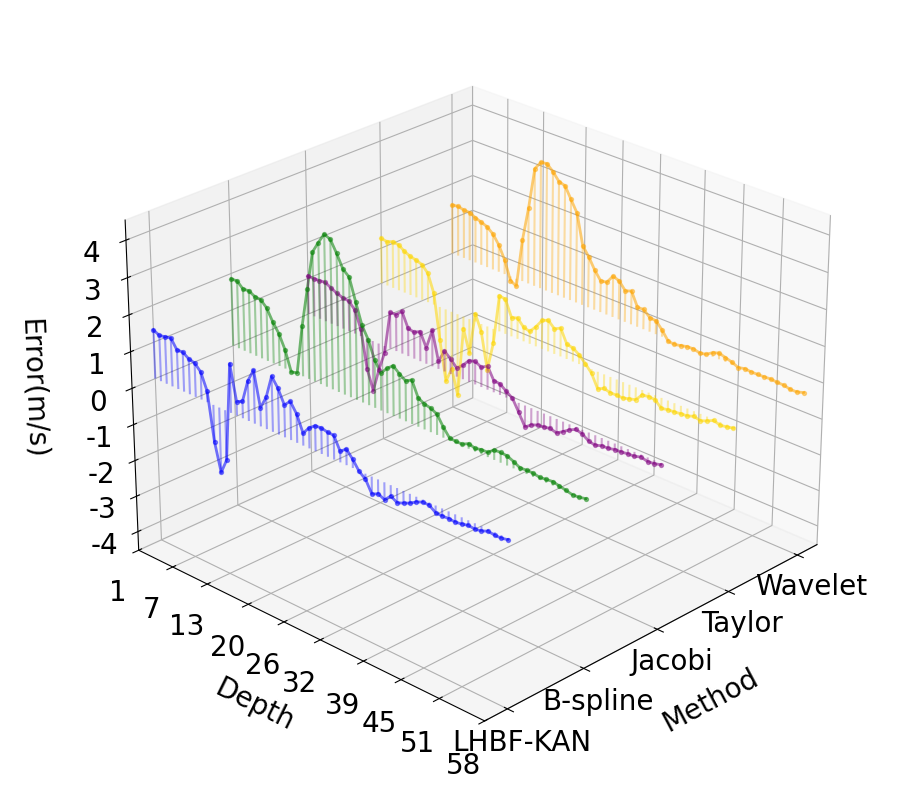}}%
	\label{fig14a}
	\centering
	\subfloat[]{\includegraphics[width=0.45\linewidth]{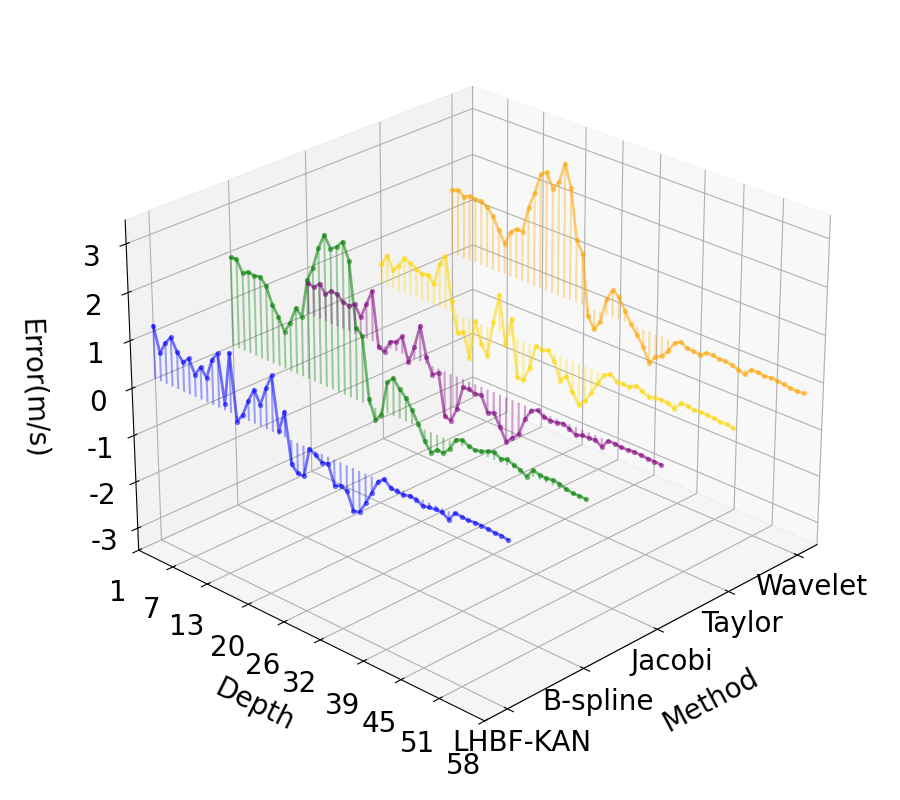}}%
	\label{fig14b}
	\caption{Error distributions of different basis functions (Argo dataset). (a) January. (b) February.}
	\label{fig14}
\end{figure}

To visualize the results, Fig.~\ref{fig13} shows the predicted SSPs for January and February. As can be seen from the curves, the fits produced by the B-spline and Wavelet basis functions are inferior to those of the Jacobi and Taylor models, which is consistent with the table results. The corresponding error distributions for the four basis functions and LHBF-KAN in the same months are presented in Fig.~\ref{fig14}. These indicate that the error distributions of B-spline and Wavelet are clearly worse than those of Jacobi and Taylor, with B-spline and Wavelet exhibiting large extreme errors, whereas the extrema for Jacobi and Taylor are considerably smaller. The prediction errors of LHBF-KAN in January and February are generally comparable to those of Jacobi and Taylor.

\begin{figure}[!htbp]
	\centering
	\subfloat[]{\includegraphics[width=0.3\linewidth]{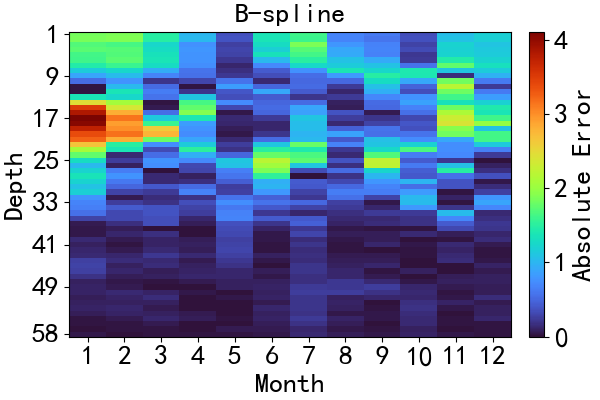}}%
	\label{fig15a}
	\subfloat[]{\includegraphics[width=0.3\linewidth]{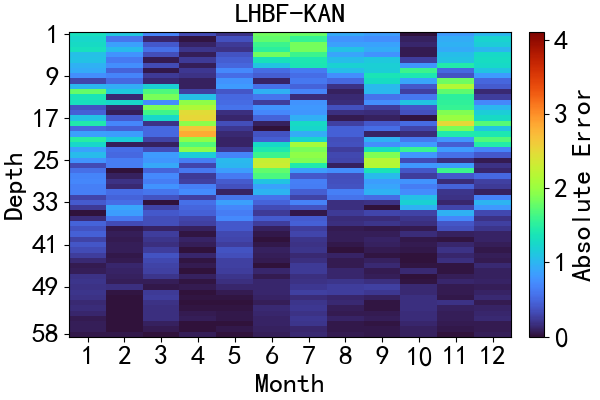}}%
	\label{fig15b}
	\\
	\subfloat[]{\includegraphics[width=0.3\linewidth]{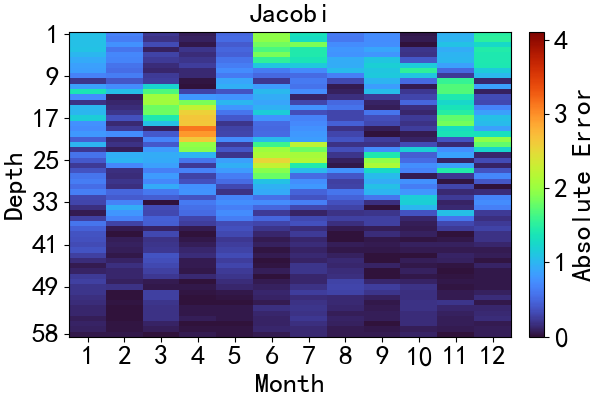}}%
	\label{fig15c}
	\subfloat[]{\includegraphics[width=0.3\linewidth]{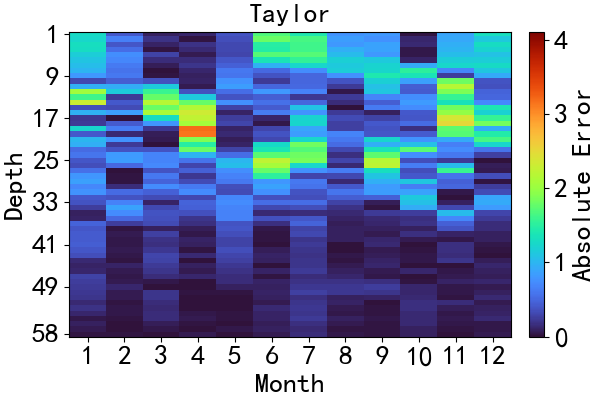}}%
	\label{fig15d}
	\subfloat[]{\includegraphics[width=0.3\linewidth]{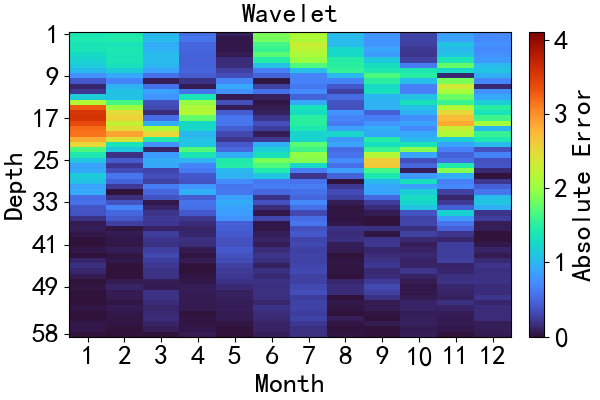}}%
	\label{fig15e}
	\caption{Error heatmap of different basis functions (Argo dataset).}
	\label{fig15}
\end{figure}

Fig.~\ref{fig15} presents heat maps of the absolute errors between predicted and actual values for the four basis functions and LHBF-KAN. As shown, B-spline and Wavelet exhibit poor overall error performance, with large extrema and numerous substantial errors. In contrast, Jacobi and Taylor perform considerably better, further indicating that B-spline and Wavelet fail to adapt well to the scale variations in the Argo dataset. This does not mean they are without merit—each achieves the best performance in certain months, but poor results in other months diminish their overall effectiveness. The error heatmap of LHBF-KAN closely resembles those of Jacobi and Taylor, suggesting that LHBF-KAN retains the strong fitting capability demonstrated by Jacobi and Taylor on the Argo dataset.

\begin{figure}[!htbp]
	\centering
	\includegraphics[width=0.4\linewidth]{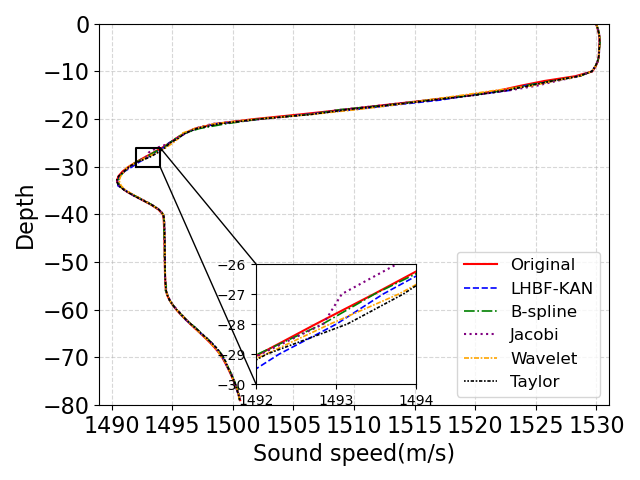}
	\caption{Comparison figure between actual and predicted of different basis functions (SW06 dataset).}
	\label{fig16}
\end{figure}

\begin{figure}[!htbp]
	\centering
	\includegraphics[width=0.4\linewidth]{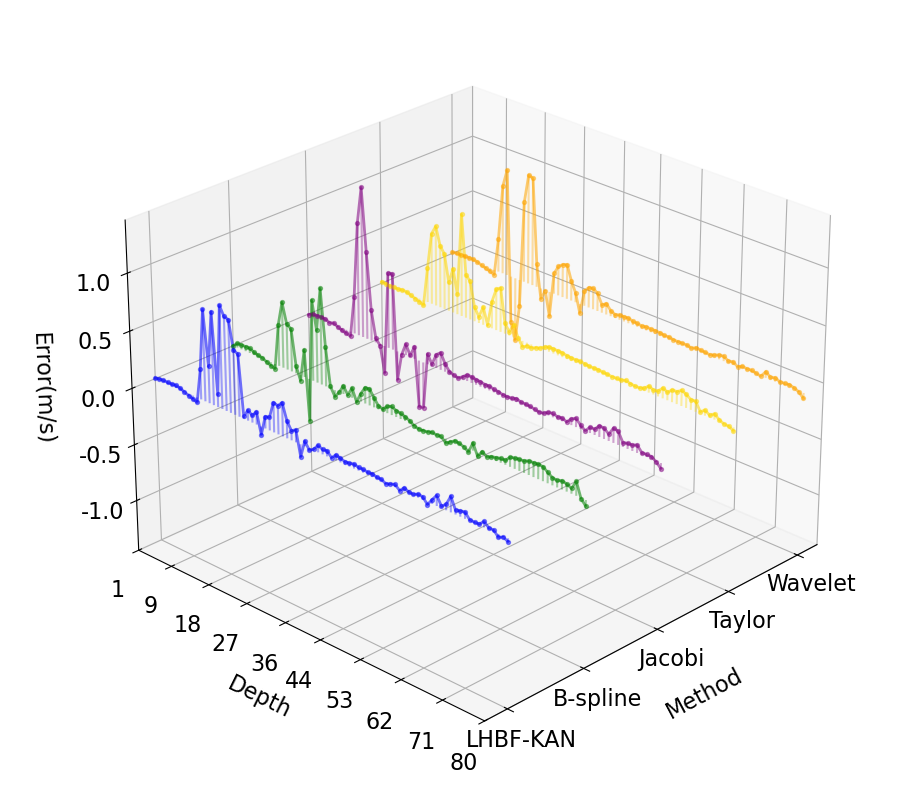}
	\caption{Error distributions of different basis functions (SW06 dataset).}
	\label{fig17}
\end{figure}

\begin{figure}[!htbp]
	\centering
	\subfloat[]{\includegraphics[width=0.3\linewidth]{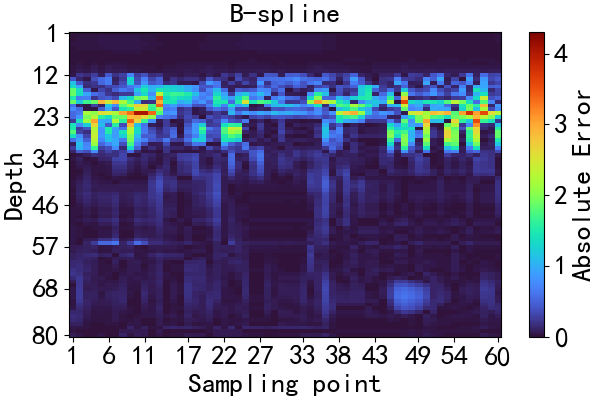}}%
	\label{fig18a}
	\subfloat[]{\includegraphics[width=0.3\linewidth]{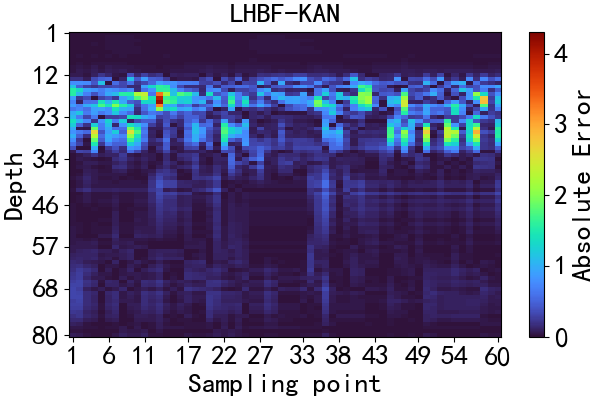}}%
	\label{fig18b}
	\\
	\subfloat[]{\includegraphics[width=0.3\linewidth]{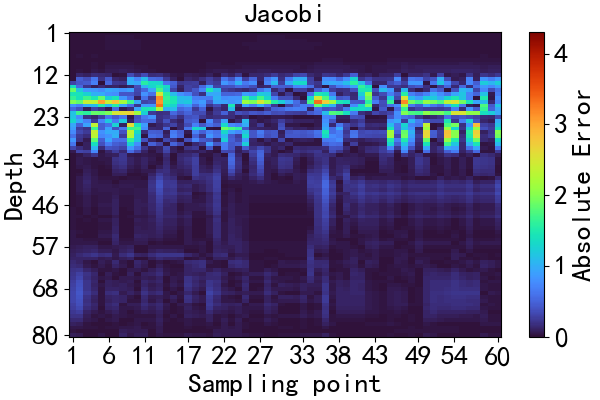}}%
	\label{fig18c}
	\subfloat[]{\includegraphics[width=0.3\linewidth]{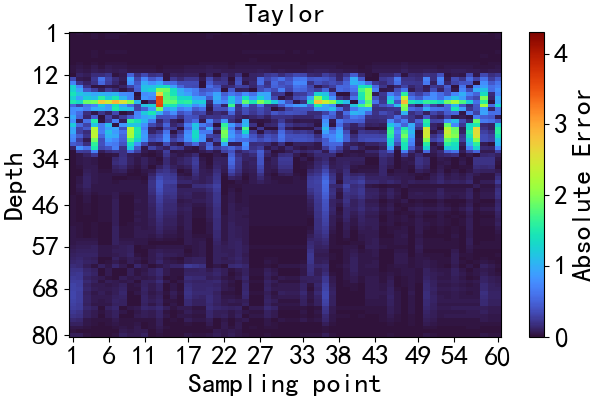}}%
	\label{fig18d}
	\subfloat[]{\includegraphics[width=0.3\linewidth]{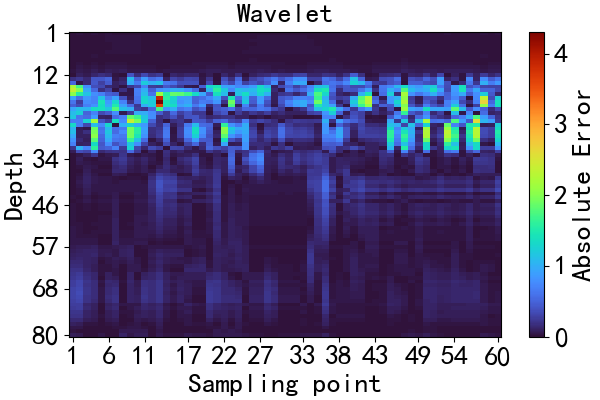}}%
	\label{fig18e}
	\caption{Error heatmap of different basis functions (SW06 dataset).}
	\label{fig18}
\end{figure}

For an intuitive comparison on SW06 dataset, Fig.~\ref{fig16} shows one predicted SSP from each model on the SW06 dataset. All basis functions capture the overall trends of sound speed variability; however, Jacobi shows a distinct outlier in the zoomed view. This observation is corroborated by the error-distribution plots in Fig.~\ref{fig17}, which show that Jacobi yields both larger maximum and larger minimum errors than the other basis functions. A possible explanation is that Jacobi adapts poorly to the frequency variations present in the SW06 dataset. Fig.~\ref{fig18} presents the absolute error heat maps for all five models. B-spline exhibits the same issues observed on the Argo dataset—large error extrema and numerous substantial errors—indicating that it also adapts poorly to the SW06 dataset. Jacobi shows similar weaknesses and likewise fails to accommodate the dataset’s frequency variations. Wavelet and Taylor perform better, showing a greater ability to handle the SW06 dataset. The error heatmap of LHBF-KAN closely resembles those of Taylor and Wavelet, suggesting that LHBF-KAN retains strong fitting capabilities comparable to these better-performing basis functions on the SW06 dataset.

Fig.~\ref{fig19} provides visual comparisons between the actual and predicted SSPs, and the corresponding error distributions are shown in Fig.~\ref{fig20}. In the zoomed region of Fig.~\ref{fig19}, the B-spline performs markedly better than the other basis functions, contrary to its behavior on the Argo and SW06 datasets. Fig.~\ref{fig20} further supports this observation: the error variability of the two SSPs predicted by the B-spline is clearly lower than that of the other three basis functions.

\begin{figure}[!htbp]
	\centering
	\subfloat[]{\includegraphics[width=0.45\linewidth]{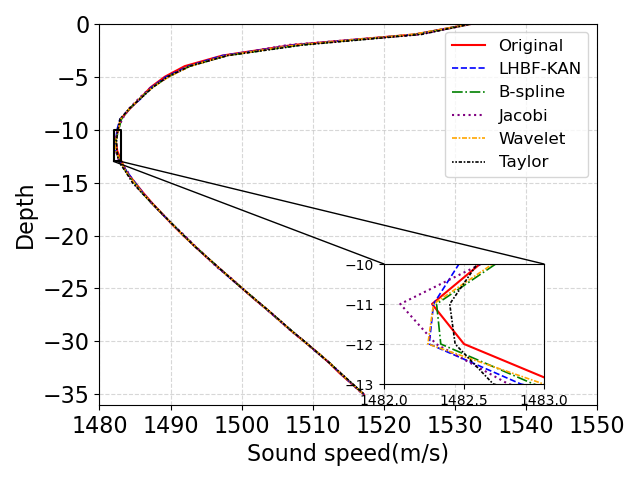}}%
	\label{fig19a}
	\centering
	\subfloat[]{\includegraphics[width=0.45\linewidth]{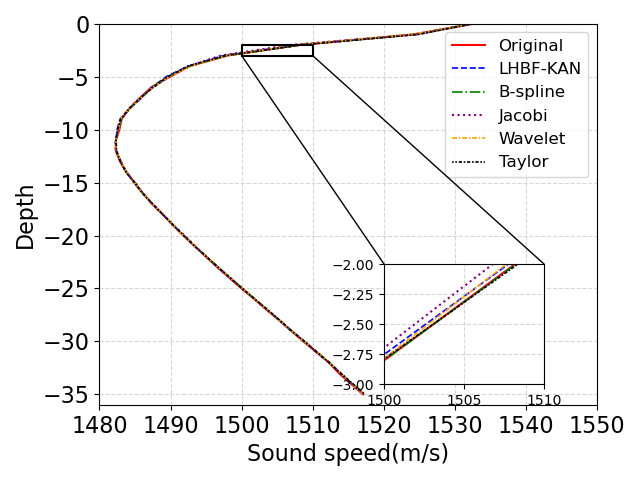}}%
	\label{fig19b}
	\caption{Comparison figure between actual and predicted of different basis functions (Ocean Experiment). (a) SSP1. (b) SSP2.}
	\label{fig19}
\end{figure}

\begin{figure}[!htbp]
	\centering
	\subfloat[]{\includegraphics[width=0.45\linewidth]{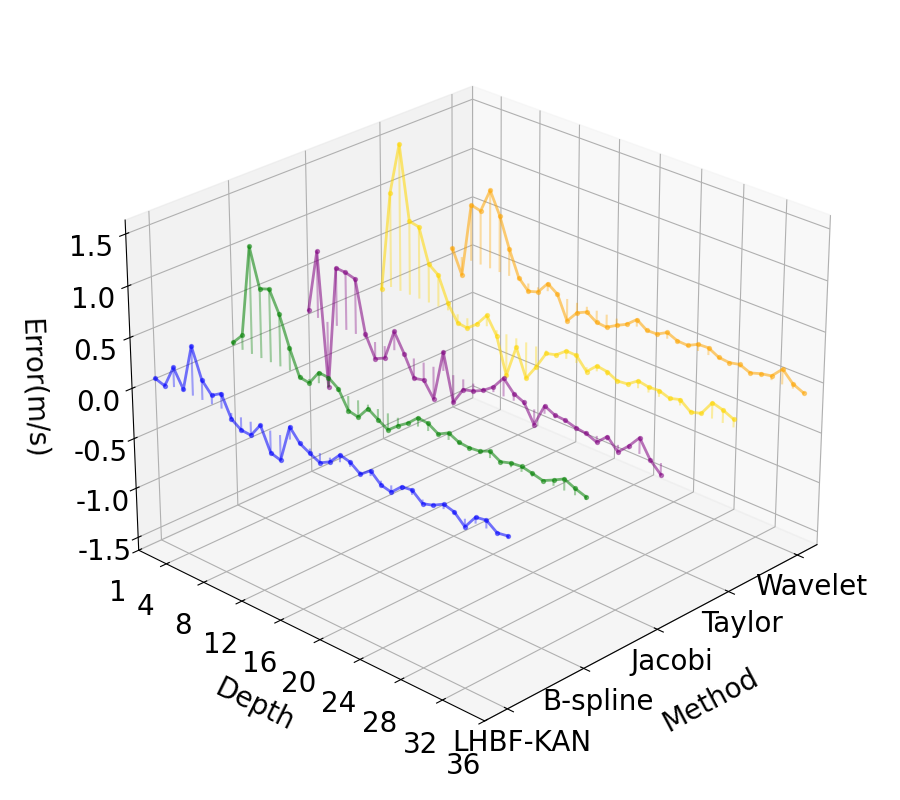}}%
	\label{fig20a}
	\centering
	\subfloat[]{\includegraphics[width=0.45\linewidth]{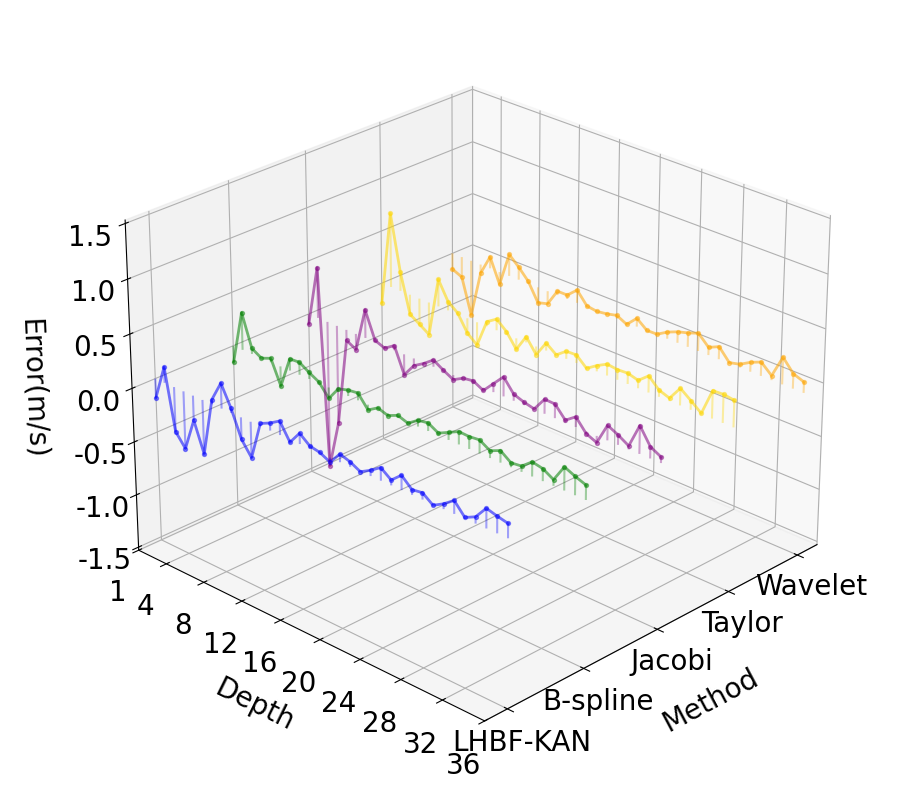}}%
	\label{fig20b}
	\caption{Error distributions of different basis functions (Ocean Experiment). (a) SSP1. (b) SSP2.}
	\label{fig20}
\end{figure}

\subsubsection{Frequency-Domain Interpretation}
Fig.~\ref{fig21} compares the normalized spectral-energy distributions of the three datasets. The purpose of this analysis is to interpret the ablation results rather than to claim that each dataset belongs exclusively to a single frequency band.

Argo is dominated by slowly varying components because of its monthly sampling interval. SW06 contains stronger short-term fluctuations caused by internal-wave activities, while the ocean-experiment dataset exhibits a different mixture of low-, medium-, and relatively high-frequency components. These distinct spectral distributions explain why the optimal single basis changes across datasets.

The correspondence between the spectral analysis and the ablation results supports the use of hybrid basis functions. Low-frequency-oriented bases, such as Jacobi and Taylor, perform favorably on datasets dominated by smooth variations, whereas localized or flexible bases become more useful when rapid variations are present. LHBF-KAN avoids manually assigning a fixed basis by learning the relative contribution of each branch from data.

\begin{figure}[!htbp]
	\centering
	\subfloat[]{\includegraphics[width=0.32\linewidth]{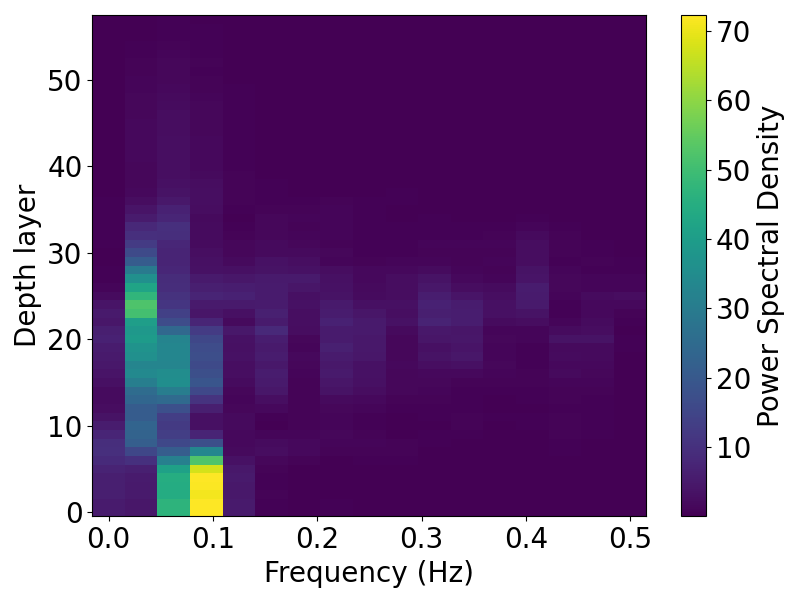}}%
	\label{figa}
	\subfloat[]{\includegraphics[width=0.32\linewidth]{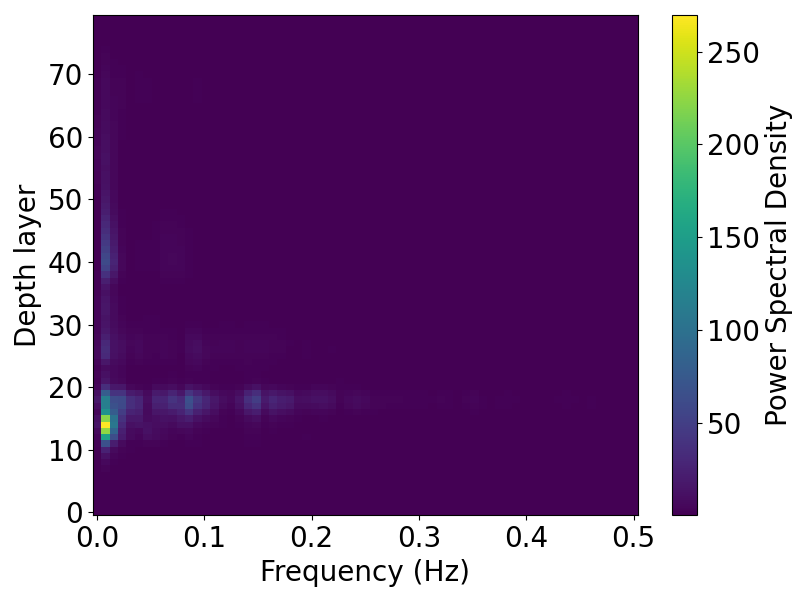}}%
	\label{figs}
	\subfloat[]{\includegraphics[width=0.32\linewidth]{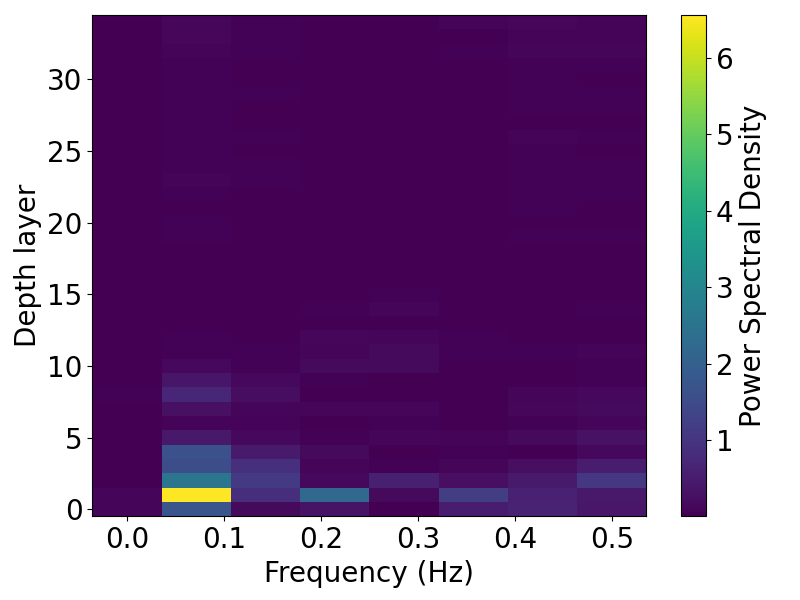}}%
	\label{figo}
	\caption{Analysis of frequency components of the dataset. (a) Argo. (b) SW06. (c) Ocean Experiment.}
	\label{fig21}
\end{figure}

\subsubsection{Branch Selection and Weight Evolution}
The frequency-domain analysis explains why different basis functions are suitable for different datasets, but it does not explain how LHBF-KAN automatically selects these basis functions during training. To better understand this adaptive behavior, we visualize the evolution of the branch weights throughout the optimization process.

Figure~\ref{figw} illustrates the evolution of the branch coefficients for several representative depth layers from the Argo, SW06, and ocean-experiment datasets. During the early training stage, all basis-function branches contribute with similar weights because the model has not yet identified the dominant temporal characteristics of
the input data.

\begin{figure}[!htbp]
	\centering
	\subfloat[]{\includegraphics[width=0.32\linewidth]{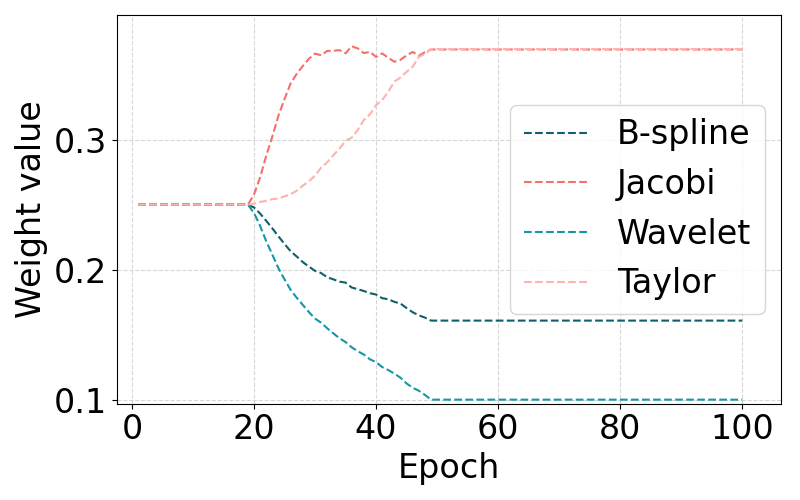}}
	\subfloat[]{\includegraphics[width=0.32\linewidth]{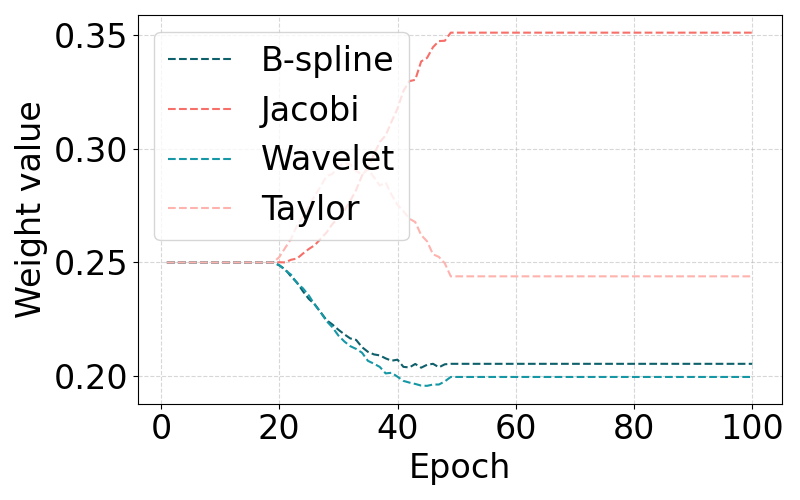}}
	\subfloat[]{\includegraphics[width=0.32\linewidth]{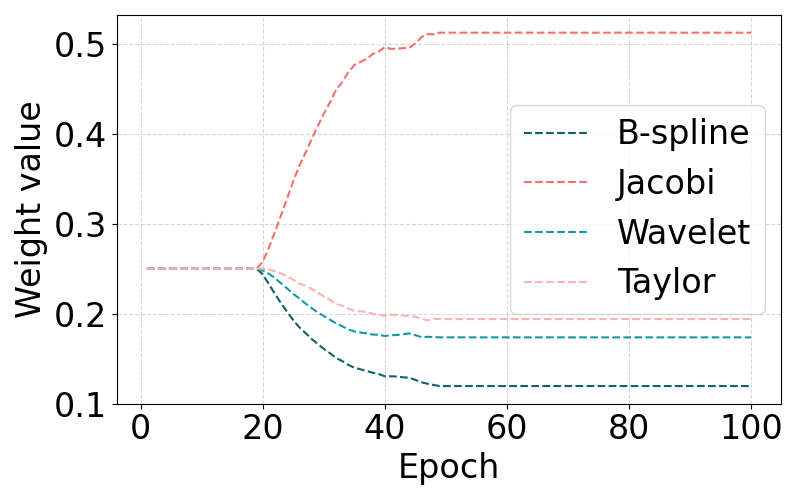}}\\
	\subfloat[]{\includegraphics[width=0.32\linewidth]{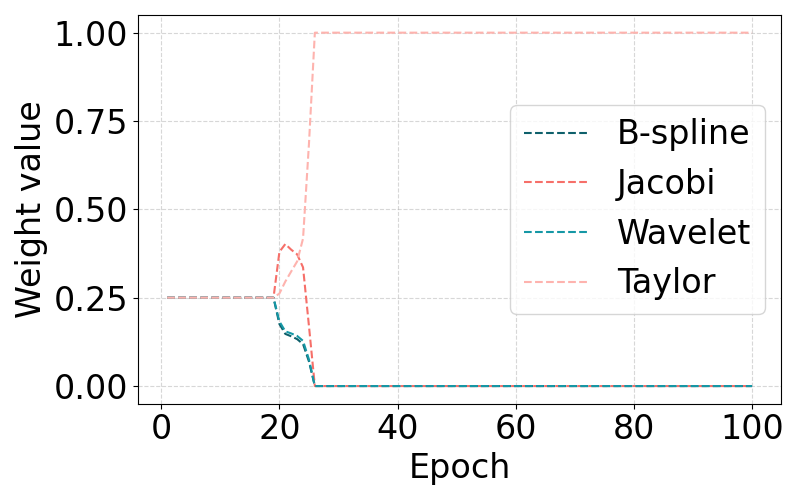}}	\subfloat[]{\includegraphics[width=0.32\linewidth]{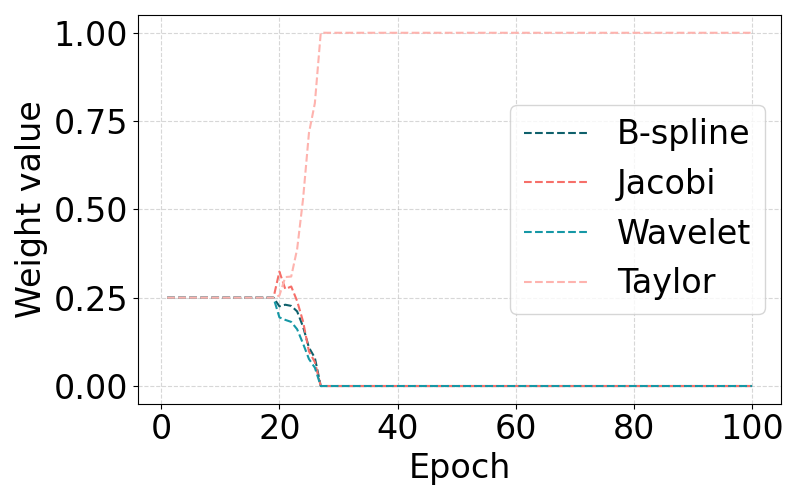}}
	\subfloat[]{\includegraphics[width=0.32\linewidth]{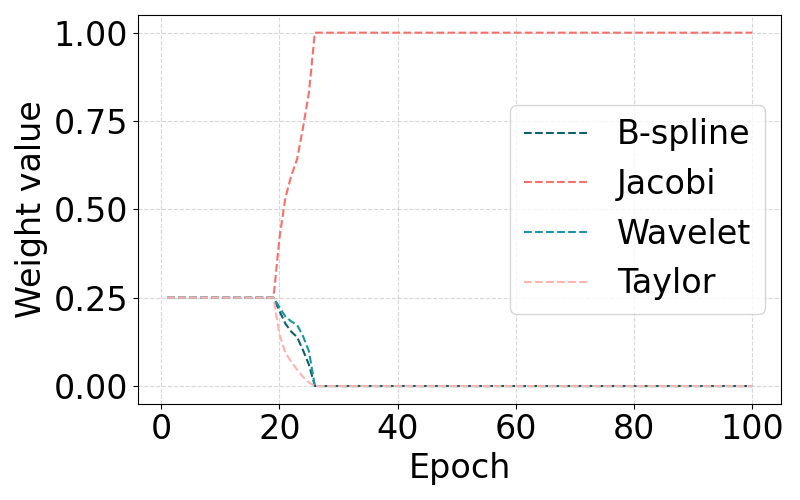}}\\
	\subfloat[]{\includegraphics[width=0.32\linewidth]{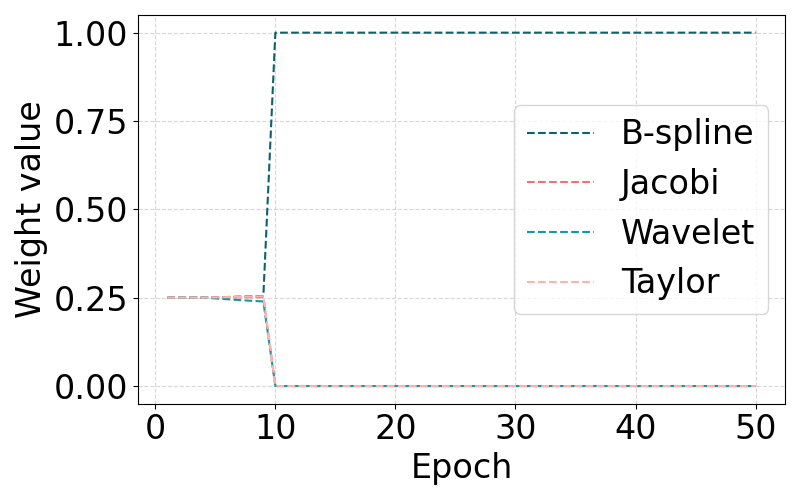}}
	\subfloat[]{\includegraphics[width=0.32\linewidth]{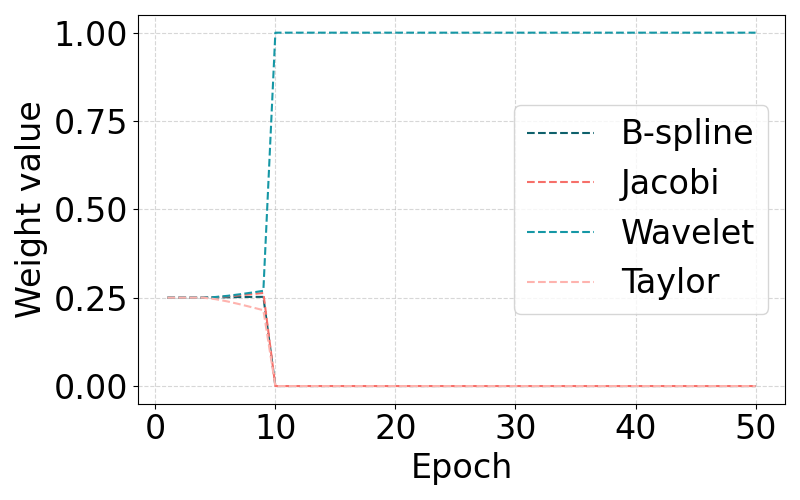}}
	\subfloat[]{\includegraphics[width=0.32\linewidth]{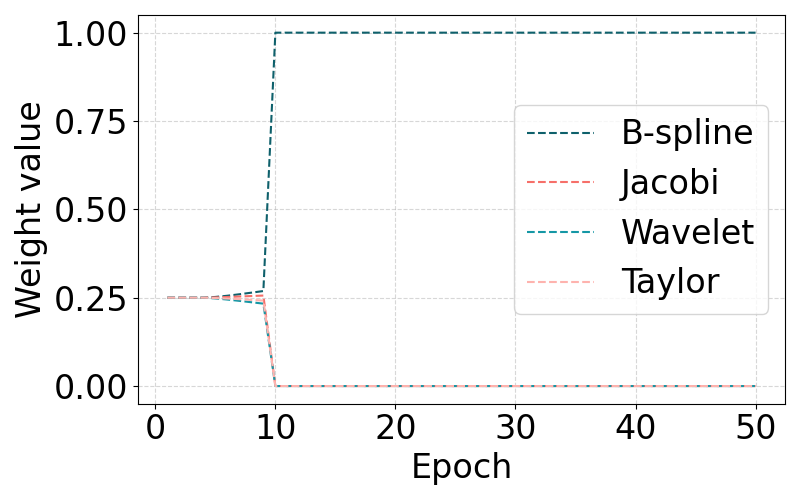}}
	\caption{Examples of weight evolution visualization. (a-c) Argo. (d-f) SW06. (g-h) Ocean Experiment.}
	\label{figw}
\end{figure}

As training proceeds, the branch coefficients gradually diverge. Basis functions that better match the underlying temporal patterns receive increasing weights, whereas branches with limited contributions are progressively suppressed.
The branch-selection behavior differs across datasets. On the Argo dataset, several branches remain active after convergence, indicating that multiple basis functions jointly contribute to representing the slowly varying seasonal evolution. Consequently, no branch satisfies the pruning criterion, and the original architecture is preserved. On the SW06 dataset, one or two branches gradually dominate while the weights of the remaining branches approach zero. This behavior suggests that the rapidly varying internal-wave signals are adequately represented by only a subset of basis functions, allowing redundant branches to be removed without noticeably affecting prediction accuracy. A similar phenomenon can be observed on the ocean-experiment dataset, where the branch coefficients quickly converge to a sparse distribution. Because only a limited number of SSPs are available, the pruning strategy suppresses unnecessary branches at an early stage, producing a compact model while retaining the dominant representation capability.

These observations indicate that the proposed pruning strategy does not simply reduce the model size by removing parameters arbitrarily. Instead, it performs data-driven branch selection according to the spectral characteristics of the target dataset. Combined with the frequency-domain analysis in Fig.~\ref{fig13}, the results demonstrate that LHBF-KAN automatically adapts its functional representation to different temporal scales while maintaining a compact
network architecture.

\subsection{ Model Complexity and Pruning Effectiveness}
\begin{table}[!htbp]
	\caption{Time Complexity and Trainable Parameters of the Models \label{tab-time}}
	\centering
	\scalebox{0.85}{
		\begin{tabular}{cccccc}
			\hline
			\multicolumn{2}{c}{Model} & Ours & Bi-LSTM & Transformer & PatchTST \\
			\hline
			\multirow{3}{*}{\centering Time(s)} & Argo & \textbf{54.0} & 77.9 & 62.2 & 230.3 \\
			~ & SW06 & \textbf{409.3} & 740.2 & 504.2 & 1134.1\\
			~ & Ocean Experiment & \textbf{4.5} & 8.2 & 6.9 & 11.7 \\
			\multicolumn{2}{c}{Trainable Parameters} & \textbf{6657} & 5325441 & 3945985 & 102019\\
			\hline
		\end{tabular}
	}
\end{table}
To compare the computational costs of different models, we report their training times and numbers of trainable parameters in Table~\ref{tab-time}. All models were trained on the same NVIDIA RTX 5090 GPU using identical training and test splits. Training time was measured from the beginning of the first optimization epoch to the end of the final epoch, excluding data loading and preprocessing. Each experiment was repeated five times, and the average wall-clock time is reported.

LHBF-KAN requires only 6,657 trainable parameters, whereas Bi-LSTM and Transformer contain more than $5.3$ million and $3.9$ million parameters, respectively. Therefore, the parameter count of LHBF-KAN is approximately three orders of magnitude smaller than those of the recurrent and attention-based baselines.

\begin{table}[!htbp]
	\centering
	\caption{Effect of pruning on model complexity and forecasting accuracy.}
	\label{tab:pruning}
	\resizebox{\columnwidth}{!}{
		\begin{tabular}{lcccc}
			\hline
			Dataset & Parameters before & Parameters after
			& Reduction & RMSE change \\
			\hline
			Argo & 6657 & 6657 & 0.0\% & $\Delta_{\mathrm{Argo}}$ \\
			SW06 & 6657 & 2439 & 63.4\% & $\Delta_{\mathrm{SW06}}$ \\
			Ocean Experiment & 6657 & 2219 & 66.7\% & $\Delta_{\mathrm{OE}}$ \\
			\hline
	\end{tabular}}
\end{table}
Table~\ref{tab:pruning} shows that LHBF-KAN contains only 6,657 trainable parameters before pruning, substantially fewer than Bi-LSTM, Transformer, and PatchTST. After pruning, the parameter counts on SW06 and the ocean-experiment dataset are reduced to 2,439 and 2,219, corresponding to reductions of approximately $63.4\%$ and $66.7\%$, respectively.

No branch is removed on Argo because the learned branch weights remain distributed across multiple basis functions. This result should not be interpreted as a failure of the pruning strategy. Instead, it indicates that the pruning procedure preserves multiple branches when they continue to provide complementary contributions.

To evaluate the accuracy--complexity trade-off, Table~\ref{tab:pruning} also reports the RMSE before and after pruning. The limited change in RMSE shows that the reduction in model size is obtained without substantial degradation in forecasting accuracy.

\subsection{Impact on Underwater Localization}
Accurate SSP forecasting is meaningful only if it improves downstream underwater acoustic tasks. We therefore evaluate the effect of the predicted SSPs on ray-tracing-based localization, the scenario of which is given in Fig.~\ref{fig23}.
Firstly, we simulate the signal propagation time using the real sound speed distribution in Argo dataset. Then, we adopted ray tracing with the predicted sound speed distribution to correct the time measurement error, so as to localize the target anchor and calculate the average positioning error. To compare the positioning errors caused by differences in sound speed, we assume that the ranging error is not affected by channel noise and is only caused by differences in sound speed. The average positioning error with the predicted sound speed obtained by different models is given in Table~\ref{ref-table-12}. Although the magnitude of sound velocity error is not strictly consistent with the magnitude of positioning error, the overall trend is that the smaller the sound velocity error, the smaller the positioning error.

\begin{figure}[!htbp]
	\centering
	\includegraphics[width=0.4\linewidth]{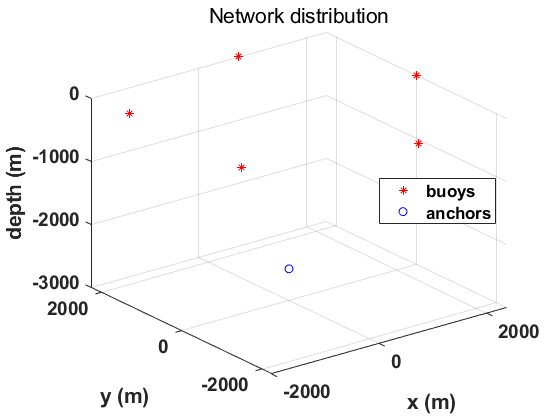}
	\caption{Scenario of localization experiments. (Argo dataset).}
	\label{fig23}
\end{figure}

\begin{table}[!htbp]
	\centering    
	\caption{Average positioning error with the predicted sound speed obtained by different models (Argo dataset).}
	\begin{tabular}{ccccc}
		\hline
		Model & Ours & BiLSTM & PatchTST & Transformer \\
		\hline
		SSP Prediction errors (m/s) & 0.539 & 1.017 & 0.914 & 0.756 \\
		Positioning errors (m) & 0.2001 & 0.7566 & 0.7850 & 0.5102 \\
		\hline
	\end{tabular}
	\label{ref-table-12}
\end{table}

\subsection{Summary of Experimental Findings}
The main experimental findings can be summarized as follows. First, LHBF-KAN achieves the best aggregate forecasting performance across monthly, 30-s, and approximately two-hourly SSP observations, demonstrating its ability to handle distinct temporal scales. Second, the improvement is most pronounced in regions with abrupt temporal or depth-dependent variations, particularly on the internal-wave-affected SW06 dataset. The error heatmaps show that LHBF-KAN reduces both the magnitude and spatial extent of large prediction errors. Third, no individual basis function is optimal for all datasets. Jacobi, Taylor, B-spline, and Wavelet exhibit different advantages depending on the spectral characteristics of the data. LHBF-KAN remains close to the dataset-specific optimum without requiring prior basis selection, which
demonstrates its cross-scale robustness. Fourth, the pruning strategy reduces the number of parameters by more than
$60\%$ on the SW06 and ocean-experiment datasets while preserving prediction accuracy. Finally, the improved SSP forecasts lead to lower localization errors, confirming the practical value of the proposed model.

\section{Conclusion}
In this work, we developed the LHBF-KAN model to address the long-standing challenge that existing SSP prediction methods struggle with the inherently multi-scale and depth-dependent variability of ocean sound-speed evolution. By integrating multiple basis functions with distinct scale sensitivities and employing a pruning mechanism to achieve lightweight deployment, LHBF-KAN provides an adaptive and efficient representation of SSP dynamics. Comprehensive evaluations on Argo, SW06, and real ocean-experiment datasets demonstrate consistent improvements over Transformer, BiLSTM, and PatchTST baselines, validating the model’s ability to capture heterogeneous variability while maintaining low computational cost. These results highlight the potential of hybrid basis-function modeling as a practical and robust solution for future underwater sensing and prediction systems.


\bibliographystyle{IEEEtran}
\bibliography{IEEEabrv,LHBF-KAN}

\end{document}